\documentclass[12pt,draftclsnofoot,onecolumn]{IEEEtran}

\usepackage{indentfirst,flushend}
\usepackage[justification=centering]{caption}
\usepackage{amssymb,amscd,amsmath,amsthm,amsfonts,mathrsfs,times,bm,bbm}
\usepackage[colorlinks,linkcolor=black,anchorcolor=black,citecolor=black]{hyperref}
\usepackage{latexsym}
\usepackage{cite}
\usepackage{graphicx}
\usepackage{subfigure}
\usepackage{color}
\usepackage{cases}
\usepackage{algorithm}
\usepackage{algorithmic}
\usepackage{clrscode}
\usepackage{psfrag,stfloats,nicefrac}
\usepackage{supertabular}
\usepackage{makecell}
\usepackage{booktabs}
\usepackage{array}
\usepackage{multirow}
\usepackage{threeparttable}
\allowdisplaybreaks[4]

\newtheorem{remark}{Remark}
\newtheorem{theorem}{Theorem}

\newtheorem{proposition}{Proposition}

\newcommand{\rmnum}[1]{\uppercase\expandafter{\romannumeral #1\relax}}

\begin{document}
\title{Cross-Layer Optimization: Joint User Scheduling and Beamforming Design With QoS Support in Joint Transmission Networks}
\author{Shiwen~He,~\IEEEmembership{Member,~IEEE},~Zhenyu An,~\IEEEmembership{Student Member,~IEEE},\\~Jianyue Zhu,~\IEEEmembership{Member,~IEEE},~Min Zhang,~Yongming Huang,~\IEEEmembership{Senior Member,~IEEE},~and Yaoxue~Zhang,~\IEEEmembership{Senior Member,~IEEE}
\thanks{S. He is with the School of Computer Science and Engineering, Central South University, Changsha 410083, China. S. He is also with the National Mobile Communications Research Laboratory, Southeast University, and the Purple Mountain Laboratories, Nanjing 210096, China. (email: shiwen.he.hn@csu.edu.cn). }
\thanks{Z. An is with the Purple Mountain Laboratories, Nanjing 210096, China. (email: anzhenyu@pmlabs.com.cn). }
\thanks{J. Zhu is with  the College of Electronic and Information Engineering, Nanjing University of Information Science and Technology, Nanjing, China,  She is also with the School of Information Science and Engineering, Southeast University, Nanjing 210096, China.(email: zhujy@seu.edu.cn).}
\thanks{M. Zhang is with the School of information and communication, Hunan Post and Telecommunication College, Hunan, Changsha, 410015, China. (email: 380072457@qq.com)}
\thanks{Y. Huang is with the National Mobile Communications Research Laboratory, School of Information Science and Engineering, Southeast University, Nanjing 210096, China. He is also with the Purple Mountain Laboratories, Nanjing 210096, China. (email: huangym@seu.edu.cn). }
\thanks{Y. Zhang is with the Department of Computer Science and Technology, Tsinghua University, Beijing 100084, China. (email: zhangyx@tsinghua.edu.cn)}
}

\maketitle
\vspace{-.6 in}

\begin{abstract}
User scheduling and beamforming design are two crucial yet coupled topics for wireless communication systems. They are usually optimized separately with conventional optimization methods. In this paper, a novel cross-layer optimization problem is considered, namely, the user scheduling and beamforming are jointly discussed subjecting to the requirement of per-user quality of service (QoS) and the maximum allowable transmit power for multicell multiuser joint transmission networks. To achieve the goal, a mixed discrete-continue variables combinational optimization problem is investigated with aiming at maximizing the sum rate of the communication system. To circumvent the original non-convex problem with dynamic solution space, we first transform it into a 0-1 integer and continue variables optimization problem, and then obtain a tractable form with continuous variables by exploiting the characteristics of 0-1 constraint. Finally, the scheduled users and the optimized beamforming vectors are simultaneously calculated by an alternating optimization algorithm. We also theoretically prove that the base stations allocate zero power to the unscheduled users. Furthermore,  two heuristic optimization algorithms are proposed respectively based on brute-force search and greedy search. Numerical results validate the effectiveness of our proposed methods, and the optimization approach gets relatively balanced results compared with the other two approaches.
\end{abstract}
\begin{IEEEkeywords}
Joint transmission, User scheduling, Beamforming design, Non-convex optimization.
\end{IEEEkeywords}

\section{\label{Introduction}Introduction}
At present, the application of the fifth generation (5G) communication technologies is gradually infiltrated into people's daily life. However, to satisfy the future demands for information and communications technology (ICT) in 2030~\cite{OJWCJiang2021}, the researchers in both academia and industry have turned their attention to the research of 6G communication technologies. Numerous potential communication technologies, e.g., massive multiple-input multiple-output (MIMO) and millimeter-wave communication, may be adopted to satisfy the extreme demands of future wireless traffic, such as in the ultra-reliable low latency communication and massive machine type communication scenarios~\cite{ProHe2021}. In particular, among the future communication technologies, coordinated  multi-point joint transmission~\cite{MagIrmer2011} and multi-connected technologies~\cite{ProTataria2021} are two enablers to address the massive user connection problem, which have attracted extensive attentions in both academia and industry, to address the massive user connection problem for ultra-dense networks.

For coordinated multi-point joint transmission, the fundamental issues are user scheduling and beamforming design implemented at the media access control layer~\cite{TSPDimic2005} and the physical layer~\cite{TWCZhang2009}, respectively. Unfortunately, for the design of transmission scheme in wireless communication systems, these two issues are always coupled, which is difficult to be solved. To the best of our knowledge, in the literature, one usually solved a serious of subproblems to release the coupled relations. For example, the authors of~\cite{JSACYoo2006} proposed a semiorthogonal user selection (SUS) algorithm cooperating with the zero-forcing beamforming (ZFBF) for the downlink multiuser communication systems. Numerical results show that ZFBF-SUS transmission performs reasonably well under practical value of users' number. However, the authors of~\cite{JSACYoo2006} did not consider the quality of service (QoS) in the ZFBF-SUS algorithm. On the other hand, for a fixed scheduled user set, the optimization of the transceivers is also a research hotspot for wireless communication systems~\cite{TSPYu2007}. In general, for multicell multiuser MIMO communication networks, even fixing the scheduled user set, the problem of beamforming design is non-convex and is hard to be solved~\cite{TITHuh2012}. Usually, the uplink-downlink duality theory insert a space is used as a powerful tool for tackling the non-convex transceivers design~\cite{TWCDah2010,TITZhang2012,TCOMHe2015}. Note that in these mentioned references, the user scheduling and beamforming design are separately considered. To further improve the performance of communication systems, recently, cross-layer design is increasingly becoming popular~\cite{CSTFu2014}. For example, the authors of~\cite{CLSun2021} investigated various resource allocation and user scheduling methods under the constraint of delivery latency for the downlink multiuser communication systems. The authors of~\cite{TWCNasir2021} studied the resource allocation problem for ultra-reliable low latency communication systems. However, the authors of~\cite{CLSun2021} and~\cite{TWCNasir2021} only studied the resource allocation problem for a single antenna multiuser communication system.

In the last few years, joint user scheduling and beamforming design has became another hotspot in both academia and industry. The authors of~\cite{JSACHong2013} studied the joint base station (BS) clustering and beamforming design for partial coordinated transmission in heterogeneous networks. The authors of~\cite{TWCZhang2017} investigated the joint user scheduling and beamforming design for large-scale multiple-input single-output (MISO) systems. However, the beamforming design and user scheduling are separately optimized, which cannot guarantee the optimal performance of wireless communication systems. The authors of~\cite{TWCJiang2018} investigated the joint user scheduling and analog beam selection problem for codebook-based massive MIMO downlink systems with hybrid antenna architecture and a diagonal baseband precoding matrix. The authors of~\cite{TCOMAntoniolo2020} investigated the problem of joint user scheduling and beamforming design via adopting rate-relaxation variables for the downlink of coordinated multicell multiuser communication systems. But, the provided method allocates a zero power to the deactivated users, which was not proved theoretically. Furthermore, as shown in Fig.~4, the BS may allocate transmitting power to a deactivated user, i.e., whose rate requirement has not been met. The authors of~\cite{TWCKhan2020} investigated the joint user scheduling and beamforming design for coordinated beamforming communication system without the requirements of quality-of-service (QoS) per-user. The authors of~\cite{SysAkhtar2021} studied the joint user scheduling and antenna selection for the downlink multiuser communication system with zero-forcing beamforming and the requirement of QoS. To address the problem of joint user scheduling and beamforming design in ultra-dense communication networks, the authors of~\cite{CLHe2021} aimed to maximizing the set cardinality of scheduled users subjecting to the requirement of QoS and the maximum allowable transmit power. 

In this paper, as a new work for the downlink of multicell multiuser joint transmission networks, we focus on investigating the problem of joint user scheduling and beamforming design subjecting to the requirement of  per-user QoS and the maximum allowable transmit power. The problem of joint user scheduling and beamforming design is formulated as a mixed discrete-continue variables combinational optimization problem. To overcome the difficulties encountered in solving the problem, some basic problem transformation methods are derived and then an effective and efficient optimization method is developed. The main contributions are listed as follows:
\begin{itemize}
\item Firstly, to solve the mixed integer programming problem, we transform this problem into a 0-1 integer and continue variables optimization problem. Particularly, we theoretically prove that the BSs allocate zero power to the unscheduled users.
\item Secondly, by exploiting the characteristics of 0-1 constraints,  a tractable form is obtained with  continuous variables.
\item Thirdly, an effective and efficient optimization algorithm is developed. The convergence of the proposed algorithm is also guaranteed with monotonic boundary theory and sub-gradient theory.
\item Fourthly, to compare the performance, two heuristic optimization algorithms are also developed with brute-force search and greet search, respectively.
\item Finally, a large number of experimental results are provided to validate the effectiveness of the developed algorithm.
\end{itemize}

The rest of this paper is organized as follows. Section~\rmnum{2} describes the system model and raises the joint user scheduling and beamforming problem. Section~\rmnum{3} formulates the problem transformation to obtain a tractable form. Section~\rmnum{4} proposes an alternating optimization algorithm and two heuristic search algorithms. Numerical results are presented in Section~\rmnum{5}. Finally, conclusions are drawn in Section~\rmnum{6}.

\textbf{\textcolor{black}{$\mathbf{\mathit{Notations}}$}}: We indicate matrices by bold uppercase letters and column vectors by bold lowercase letters. $\mathcal{A}$ denotes a set and $\mathbf{A}[m,n]$ denotes the element of the $m$-th row and the $n$-th column in matrix $\mathbf{A}$. $\left|\cdot\right|$ denotes the absolute value of a complex scalar or the cardinality of a set, and $\left\|\cdot\right\|$ denotes the Euclidean vector norm. $(\cdot)^T$, $(\cdot)^H$ and $(\cdot)^{-1}$ indicate the transpose, the Hermitian transpose, and the inverse of a matrix, respectively. $\mathbb{R}^M$ indicates $M$-dimensional real space and $\mathbb{C}^M$ indicates $M$-dimensional complex space.

\section{\label{SystemModelAndProblem}System Model and Problem Formulation}
In this work, we investigate the coordinated user scheduling and beamforming design for the downlink of multicell multiuser joint transmission networks, where a $B$ BSs cooperative cluster serving $S$ single antenna users. Each BS is equipped with $N_{\mathrm{t}}$ transmitting antennas. The $S$ users are scheduled from $K$ users waiting to be served. Let $\mathcal{B}=\{1,2,\cdots,B\}$, $\mathcal{K}=\{1,2,\cdots,K\}$, and $\mathcal{S}\subseteq\mathcal{K}$ be the set of all BSs, candidate users, and served users, respectively. The channel coefficient between the $b$-th BS and the $k$-th user is denoted as $\mathbf{h}_{k,b}=\sqrt{\varrho_{k,b}}\widetilde{\mathbf{h}}_{k,b}\in\mathbb{C}^{N_{\mathrm{t}}\times 1}$, where the channel power $\varrho_{k,b}$ follows the large scale fading characteristic and it is given ${\varrho _{k,b}} = 10\hat{} ~(( - 38{\log _{10}}({d_{k,b}}) - 34.5 + {\varpi _{k,b}})/10)$ with $d_{k,b}$ being the distance between the $b$-th BS and the $k$-th user, ${\varpi _{k,b}}$ representing the log-normal shadow fading with zero mean and standard deviation 8 dB \cite{3GPP}. The elements of $\widetilde{\mathbf{h}}_{k,b}$ are independent and identically distributed (i.i.d.) with $\mathcal{CN}\left(0,1\right)$. Let $\mathbf{h}_{k}\in\mathbb{C}^{BN_{\mathrm{t}}\times 1}$ be the cascaded channel coefficient, i.e., $\mathbf{h}_{k}=\left[\mathbf{h}_{k,1}^{\mathrm{H}}, \mathbf{h}_{k,2}^{\mathrm{H}},\cdots,\mathbf{h}_{k,B}^{\mathrm{H}}\right]^{\mathrm{H}}$.  Let $p_{k}\in\mathbb{R}_{+}$ and $\mathbf{w}_{k}=\left[\mathbf{w}_{k,1}^{\mathrm{H}}, \mathbf{w}_{k,2}^{\mathrm{H}},\cdots,\mathbf{w}_{k,B}^{\mathrm{H}}\right]^{\mathrm{H}}\in\mathbb{C}^{BN_{\mathrm{t}}\times 1}$ respectively represent the coefficient of transmitting power and the coordinated unit-norm beamforming vector used by the BSs, where $\mathbf{w}_{k,b}\in\mathbb{C}^{N_{\mathrm{t}}\times 1}$ is the beamforming vectors used by the $b$-th BS for the $k$-th user. Thus, the received baseband signal at the $k$-th user is given by
\begin{equation}\label{CUSBD01}
y_{k}=\sum\limits_{l\in\mathcal{S}}\sqrt{p_{l}}\mathbf{h}_{k}^{H}\mathbf{w}_{l}s_{l}+n_{k}, \forall k\in\mathcal{S},
\end{equation}
where $s_{k}$ and $n_{k}$ denote the baseband signal and the additive white Gaussian noise with $\mathcal{CN}\left(0,\sigma_{k}^{2}\right)$ at the $k$-th user, respectively. Let $\overline{\mathbf{h}}_{k}=\frac{\mathbf{h}_{k}}{\sigma_{k}}$, the signal-to-interference-plus-noise ratio (SINR) of the $k$-th user is formulated as follows
\begin{equation}\label{CUSBD02}
\gamma_{k}=\frac{p_{k}\left|\overline{\mathbf{h}}_{k}^{H}\mathbf{w}_{k}\right|^{2}}
{\sum\limits_{l\neq k,l\in\mathcal{S}}p_{l}\left|\overline{\mathbf{h}}_{k}^{H}\mathbf{w}_{l}\right|^{2}+1}.
\end{equation}
Thus, the achievable rate of the $k$-th user is expressed as $R_{k}=\log_{2}\left(1+\gamma_{k}\right)$.

In this work, the objective is to maximize the system sum rate via coordinated user scheduling and beamforming design. Accordingly, the optimization problem is formulated as
\begin{subequations}\label{CUSBD03}
\begin{align}
&\max_{\mathcal{S}\subseteq\mathcal{K}, \left\{p_{k},\mathbf{w}_{k}\right\}} \sum\limits_{k\in\mathcal{S}}R_{k}, \label{CUSBD03a}\\
\mathrm{s.t.}~&p_{k}>0, \left\|\mathbf{w}_{k}\right\|_{2}=1, \forall k\in\mathcal{S},\label{CUSBD03b}\\
&r_{k}\leq R_{k}, \forall k\in\mathcal{S}, \left|\mathcal{S}\right|\leq BN_{\mathrm{t}},\label{CUSBD03c}\\
&\sum\limits_{k\in\mathcal{S}}p_{k}\left\|\mathbf{Q}_{b}\mathbf{w}_{k}\right\|_{2}^{2}\leq P_{b}, \forall b\in\mathcal{B},\label{CUSBD03d}
\end{align}
\end{subequations}
where $P_{b}$ is the maximum allowable transmit power of the $b$-th BS, $r_{k}>0$ is the required minimum user rate of the $k$-th user, $\forall k\in\mathcal{S}$, and permutation matrix $\mathbf{Q}_{b}$ is defined as
\begin{equation}\label{CUSBD04}
\mathbf{Q}_{b}\left[i,j\right]=
\begin{cases}
1,~~i=j=\left(b-1\right)N_{\mathrm{t}}+1,\cdots,bN_{\mathrm{t}}\\
0,~~\text{otherwise} .
\end{cases}
\end{equation}
In problem~\eqref{CUSBD03}, constraint~\eqref{CUSBD03c} assures that the minimum user rate demand of each user is to be satisfied and the total number of served users is not lager than the total number of transmitting antennas. Constraint~\eqref{CUSBD03d} ensures that the transmitting power of per-BS is not overpass the maximum allowable transmitting power.

It is not difficulty to see that problem~\eqref{CUSBD03} is a mixed integer programming problem. Solving problem~\eqref{CUSBD03} needs to address the joint optimization of scheduled user set $\mathcal{S}$ selection, the beamforming design, and the power allocation, which is NP-hard. Accordingly, for problem~\eqref{CUSBD03}, the optimal solution, even the local optimal solution, is difficult to obtain. Specifically, the difficulty of problem~\eqref{CUSBD03} lies in the following obstacles. Firstly, the achievable user rate $R_{k}$ is a non-convex function with respect to $\left\{p_{k},\mathbf{w}_{k}\right\}$ due to the interference channels in joint transmission systems. Secondly, constraint~\eqref{CUSBD03c} is non-convex due to the uncertain set $\mathcal{S}$, i.e., the set of users scheduled. Thirdly, constraint~\eqref{CUSBD03d} is non-convex with respect to $\left\{p_{k},\mathbf{w}_{k}\right\}$. Furthermore, another problem about feasibility is necessary to be discussed, i.e., whether each user in the set $\mathcal{K}$ satisfies the minimum user rate requirement $r_{k}\leq R_{k}, \forall k\in\mathcal{K}$ or not. For the joint transmission networks, we can adopt the single-user communication with full power maximum ratio transmission to simply check the feasiblity of problem~\eqref{CUSBD03}. 
\begin{figure}[t]
\centering
\includegraphics[width=1\columnwidth,keepaspectratio]{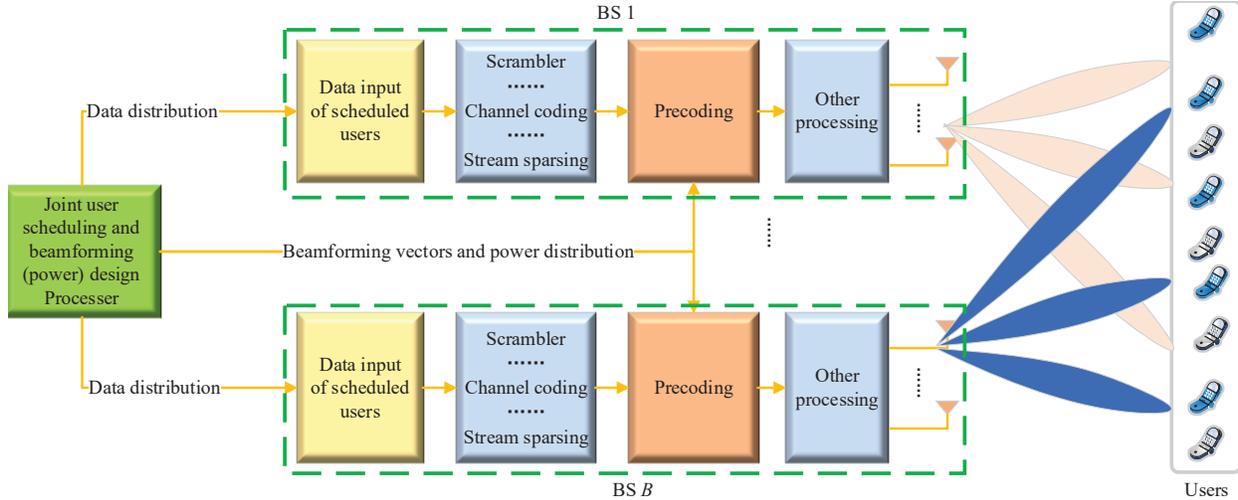}
\caption{System model of the downlink joint transmission network.}
\label{SystemModel}
\end{figure}

In general, the transmission scheme is designed for multiuser communication systems with a fixed set of users scheduled. Differently, in this work, the set of users scheduled is also needed to be optimized. This implies that the user scheduling and the design of transmission scheme are jointly optimized, as illustrated in Fig.~\ref{SystemModel}.  The transmission scheme are actually  jointly optimized cross the physical layer and media access control layer for wireless communication systems. Assume that the minimum rate requirement for each user $r_{k}\leq R_{k}$ can be met with the maximum ratio transmission and full power transmission, i.e., single user communication without considering any interference except for the additive white Gaussian noise, $\forall k\in\mathcal{K}$. The number of possible set $\mathcal{S}$ is $\sum\limits_{\left| {\cal S} \right| = 1}^{\min (B{N_{\rm{t}}},K)} {\frac{{K!}}{{\left| {\cal S} \right|!\left( {K - \left| {\cal S} \right|} \right)!}}} $. Accordingly, if there exists an optimization algorithm for solving problem~\eqref{CUSBD03} with a given set $\mathcal{S}$ of users scheduled, thus, this algorithm can be used to solve $\sum\limits_{\left| {\cal S} \right| = 1}^{\min (B{N_{\rm{t}}},K)} {\frac{{K!}}{{\left| {\cal S} \right|!\left( {K - \left| {\cal S} \right|} \right)!}}} $ combinational problems and the optimum combination can be selected out to achieve the maximum sum rate. However, in the sequel, we do not assume that $\forall k\in\mathcal{K}$, $r_{k}\leq R_{k}$ holds, implying the difficulty of the proposed problem.

\section{\label{ProblemTransformation} Problem Transformations}
In this section, we focus on releasing the uncertain set $\mathcal{S}$ of users scheduled and looking for a dual description of problem~\eqref{CUSBD03}. Consequently, a slightly tractable form of problem~\eqref{CUSBD03} is obtained and induces the design of optimization algorithms.

\subsection{Release of set $\mathcal{S}$}
For simplicity, let $\bm{\mu}=\left\{\mu_{1},\mu_{2},\cdots,\mu_{k-1},\mu_{k},\mu_{k+1},\cdots,\mu_{K}\right\}$ and $\mu_{k}$ is a binary variable, i.e., $\mu_{k}\in\left\{0,1\right\}$. If the $k$-th user is scheduled to be served for the joint transmission, $\mu_{k}=1$, otherwise, $\mu_{k}=0$. Hence, the scheduled user set is given by $\mathcal{S}=\left\{k|\mu_{k}=1, \forall k\in\mathcal{K}\right\}$ and problem~\eqref{CUSBD03} is rewritten as\footnote{The authors of~\cite{JSACHong2013} focus on investigating the problem of user-centric BS clustering with aiming to reduce the coordination overhead in joint transmission networks. So that each user is served by only a small number of (potentially overlapping) BSs. Consequently, the algorithm designed in~\cite{JSACHong2013} can only give which users the BS sends data to, but it cannot directly tell whether a BS is closed or not. This also means that their problem formulation cannot be used to describe the joint user scheduling and beamforming design problem considered in this work.}
\begin{subequations}\label{CUSBD05}
\begin{align}
&\max_{\left\{\mu_{k},p_{k},\mathbf{w}_{k}\right\}} \sum\limits_{k\in\mathcal{K}}\mu_{k}\overrightarrow{R}_{k}, \label{CUSBD05a}\\
\mathrm{s.t.}~&\mu_{k}\in\left\{0,1\right\}, \forall k\in\mathcal{K},\label{CUSBD05b}\\
&p_{k}\geq 0, \left\|\mathbf{w}_{k}\right\|_{2}=1, \forall k\in\mathcal{K},\label{CUSBD05c}\\
&\mu_{k}r_{k}\leq \overrightarrow{R}_{k}, \forall k\in\mathcal{K},\sum\limits_{k\in\mathcal{K}}\mu_{k}\leq BN_{\mathrm{t}},\label{CUSBD05d}\\
&\sum\limits_{k\in\mathcal{K}}p_{k}\left\|\mathbf{Q}_{b}\mathbf{w}_{k}\right\|_{2}^{2}\leq P_{b}, \forall b\in\mathcal{B},\label{CUSBD05e}
\end{align}
\end{subequations}
where $\overrightarrow{R}_{k}$ indicates the downlink rate of the $k$-th user, which is calculated as $\overrightarrow{R}_{k}=\log_{2}\left(1+\overrightarrow{\gamma}_{k}\right)$ with $\overrightarrow{\gamma}_{k}$ being
\begin{equation}\label{CUSBD06}
\overrightarrow{\gamma}_{k}=\frac{p_{k}\left|\overline{\mathbf{h}}_{k}^{H}\mathbf{w}_{k}\right|^{2}}
{\sum\limits_{l\neq k,l\in\mathcal{K}}p_{l}\left|\overline{\mathbf{h}}_{k}^{H}\mathbf{w}_{l}\right|^{2}+1}.
\end{equation}
Accordingly, using the formulation of problem~\eqref{CUSBD05}, we have the following conclusion.
\begin{theorem}\label{CUSBDT01}
In problem~\eqref{CUSBD05}, if $\mu_{k}=1$, then $p_{k}>0$, $k\in\mathcal{K}$, otherwise, $p_{k}=0$, $k\in\mathcal{K}$.
\end{theorem}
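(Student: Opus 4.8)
The plan is to prove the two implications separately, both by contradiction, exploiting the monotonicity of the rate functions and the structure of the power and QoS constraints.

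\textbf{Case $\mu_k=0$.} First I would argue that if $\mu_k=0$ but $p_k>0$ for some $k$, then the objective and all constraints can only improve (or stay the same) by setting $p_k=0$. Indeed, when $\mu_k=0$ the $k$-th user contributes nothing to the objective~\eqref{CUSBD05a} and its QoS constraint in~\eqref{CUSBD05d} becomes the trivial $0\le\overrightarrow{R}_k$, which holds regardless of $p_k$. However, a positive $p_k$ injects interference $p_k|\overline{\mathbf{h}}_l^H\mathbf{w}_k|^2$ into the denominator of $\overrightarrow{\gamma}_l$ for every other user $l\ne k$, strictly decreasing (weakly, at least) each $\overrightarrow{R}_l$, and it consumes per-BS power $p_k\|\mathbf{Q}_b\mathbf{w}_k\|_2^2$ in~\eqref{CUSBD05e}. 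Hence replacing $p_k$ by $0$ keeps feasibility, does not decrease the objective, and in fact strictly increases it whenever some scheduled $l$ has $\overline{\mathbf{h}}_l^H\mathbf{w}_k\ne 0$ and $\mu_l \overrightarrow R_l$ is sensitive to that interference; so any optimal solution must have $p_k=0$ when $\mu_k=0$.

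\textbf{Case $\mu_k=1$.} Here I would show $p_k=0$ is infeasible, so $p_k>0$ necessarily. If $\mu_k=1$ and $p_k=0$, then the numerator of $\overrightarrow{\gamma}_k$ in~\eqref{CUSBD06} is zero, giving $\overrightarrow{\gamma}_k=0$ and $\overrightarrow{R}_k=\log_2(1+0)=0$. But the QoS constraint~\eqref{CUSBD05d} with $\mu_k=1$ demands $r_k\le\overrightarrow{R}_k$, and by assumption $r_k>0$; this contradiction shows $p_k=0$ cannot occur, hence $p_k>0$.

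The two cases together establish the claim, and the slightly delicate point is the optimality argument in the $\mu_k=0$ case: one must be careful that it is phrased as a statement about optimal solutions of~\eqref{CUSBD05} (a solution with $\mu_k=0,p_k>0$ is not forbidden by the constraints, but cannot be optimal), whereas the $\mu_k=1$ case is a hard feasibility fact. I expect the main obstacle to be making the interference-reduction / power-saving exchange argument fully rigorous — in particular handling the degenerate sub-case where no other user's beamformer aligns with $\mathbf{w}_k$, in which case the objective is merely non-decreasing rather than strictly increasing; one then appeals to the convention that the optimizer allocates no wasted power, or equivalently restricts attention to the solution of minimal total power among those achieving the optimum, which forces $p_k=0$.
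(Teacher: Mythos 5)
Your two-case decomposition is the same as the paper's, and the $\mu_k=1$ direction is identical: $p_k=0$ forces $\overrightarrow{R}_k=0<r_k$, contradicting the QoS constraint in~\eqref{CUSBD05d}. The difference is in the $\mu_k=0$ direction, and it lands exactly on the obstacle you flag at the end. You propose to simply zero out $p_k$, observe that feasibility is preserved and the objective weakly improves, and then dispose of the degenerate case (where $\overline{\mathbf{h}}_l^H\mathbf{w}_k=0$ for all scheduled $l$, so the improvement is not strict) by appealing to a ``no wasted power'' convention or by restricting to the minimal-total-power optimizer. That proves only that \emph{some} optimal solution has $p_k=0$, which is weaker than the theorem as stated. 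The paper closes this gap with a concrete device rather than a convention: after dropping $p_m$, the per-BS power budgets in~\eqref{CUSBD05e} acquire slack, so one can rescale the powers of all scheduled users uniformly, $p_{k}''=\alpha p_{k}'$ with $\alpha>1$. The resulting SINR is
\begin{equation*}
\frac{p_{k}'\left|\overline{\mathbf{h}}_{k}^{H}\mathbf{w}_{k}'\right|^{2}}{\sum\limits_{l\neq k,\,l\in\mathcal{S}}p_{l}'\left|\overline{\mathbf{h}}_{k}^{H}\mathbf{w}_{l}'\right|^{2}+\frac{1}{\alpha}},
\end{equation*}
which is \emph{strictly} larger than the original SINR for every scheduled user because the effective noise term shrinks from $1$ to $1/\alpha$ --- entirely independently of whether user $m$'s beam actually interfered with anyone. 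This turns your weak improvement into a strict one and yields a clean contradiction with optimality, so no tie-breaking convention is needed. I recommend you replace the convention with this $\alpha$-scaling step; the rest of your argument can stand as written. (One caveat you would inherit from the paper: the existence of a uniform $\alpha>1$ implicitly assumes the dropped user consumes power at every BS whose constraint is tight, i.e., $\|\mathbf{Q}_b\mathbf{w}_m'\|_2^2>0$ there; if you want full rigor you should either note this or use a per-BS slack argument.)
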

\begin{IEEEproof}
It is easy to see that if binary variable $\mu_{k}=1$, to satisfy the required minimum user rate of the $k$-th user scheduled, transmitting power $p_{k}$ used at the BSs for the $k$-th user must be larger than zero.

Let $\left\{p_{k}^{'}\right\}$ and $\left\{\mathbf{w}_{k}^{'}\right\}$ be the optimal solution to problem~\eqref{CUSBD05}, and the corresponding binary variables $\bm{u}^{'}$. Assume that binary variable $\mu_{m}^{'}=0$, i.e., $m\in\mathcal{K}\backslash\mathcal{S}$, but $p_{m}^{'}\neq 0$, while $\mu_{j}^{'}=0$ and $p_{j}^{'}=0$,$\forall j\in\mathcal{K}\backslash\{\mathcal{S}\bigcup\{m\}\}$. Thus, we have the following relation:
\begin{equation}\label{CUSBD07}
\sum\limits_{k\in\mathcal{S}}p_{k}^{'}\left\|\mathbf{Q}_{b}\mathbf{w}_{k}^{'}\right\|_{2}^{2}< \sum\limits_{k\in\mathcal{S}}p_{k}^{'}\left\|\mathbf{Q}_{b}\mathbf{w}_{k}^{'}\right\|_{2}^{2}
+p_{m}^{'}\left\|\mathbf{Q}_{b}\mathbf{w}_{m}^{'}\right\|_{2}^{2}\leq P_{b}, \forall b\in\mathcal{B}.
\end{equation}
This implies that we can redistribute equal power $p_{l}^{'}$ among the set $\mathcal{S}$ of users scheduled. In other words, let $\left\{p_{k}^{''}=\alpha p_{k}^{'}\right\}$ and $\left\{\mathbf{w}_{k}^{''}=\mathbf{w}_{k}^{'}\right\}$, $\forall k\in\mathcal{S}$, $\alpha>1$, and $p_{m}^{''}=0$, such that
\begin{equation}\label{CUSBD08}
\sum\limits_{k\in\mathcal{S}}p_{k}^{''}\left\|\mathbf{Q}_{b}\mathbf{w}_{k}^{''}\right\|_{2}^{2}\leq\sum\limits_{k\in\mathcal{S}}p_{k}^{'}\left\|\mathbf{Q}_{b}\mathbf{w}_{k}^{'}\right\|_{2}^{2}
+p_{m}^{'}\left\|\mathbf{Q}_{b}\mathbf{w}_{m}^{'}\right\|_{2}^{2}\leq P_{b}, \forall b\in\mathcal{B}.
\end{equation}
Then, we have
\begin{equation}\label{CUSBD09}
\frac{p_{k}^{''}\left|\overline{\mathbf{h}}_{k}^{H}\mathbf{w}_{k}^{'}\right|^{2}}
{\sum\limits_{l\neq k,l\in\mathcal{S}}p_{l}^{''}\left|\overline{\mathbf{h}}_{k}^{H}\mathbf{w}_{l}^{'}\right|^{2}+1}
=\frac{p_{k}^{'}\left|\overline{\mathbf{h}}_{k}^{H}\mathbf{w}_{k}^{'}\right|^{2}}
{\sum\limits_{l\neq k,l\in\mathcal{S}}p_{l}^{'}\left|\overline{\mathbf{h}}_{k}^{H}\mathbf{w}_{l}^{'}\right|^{2}+\frac{1}{\alpha}}
>\frac{p_{k}^{'}\left|\overline{\mathbf{h}}_{k}^{H}\mathbf{w}_{k}^{'}\right|^{2}}
{\sum\limits_{l\neq k,l\in\mathcal{S}}p_{l}^{'}\left|\overline{\mathbf{h}}_{k}^{H}\mathbf{w}_{l}^{'}\right|^{2}+p_{m}^{'}\left|\overline{\mathbf{h}}_{k}^{H}\mathbf{w}_{m}^{'}\right|^{2}+1}.
\end{equation}
This means that it would be possible to obtain a larger objective function with $\left\{p_{k}^{''}\right\}$ and $\left\{\mathbf{w}_{k}^{''}\right\}$ than the global maximum with $\left\{p_{k}^{'}\right\}$ and $\left\{\mathbf{w}_{k}^{'}\right\}$, which induces a contradiction. The conclusions are proved.
\end{IEEEproof}
\begin{remark}\label{Remark02}
Using Theorem~\ref{CUSBDT01}, if binary variable $\mu_{k}= 1$, then the transmitting power $p_{k}$ used at the BSs for the $k$-th user must be larger than zero, which is to satisfy the demand of non-zero minimum user rate of user scheduled. Contrary, if binary variable $\mu_{k}= 0$, then we have the transmitting power $p_{k}=0$ used at the BSs for the $k$-th user,$\forall k\in\mathcal{K}$.
\end{remark}

\begin{proposition}\label{CUSBDP01}
Problem~\eqref{CUSBD05} is equivalent to problem~\eqref{CUSBD03}.
\end{proposition}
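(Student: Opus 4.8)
The plan is to prove the equivalence by exhibiting a value-preserving correspondence between the feasible sets of the two problems and then matching optimal values in both directions.

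First I would show that every feasible point of~\eqref{CUSBD03} lifts to a feasible point of~\eqref{CUSBD05} of equal objective value. Given a scheduled set $\mathcal{S}\subseteq\mathcal{K}$ together with $\{p_{k},\mathbf{w}_{k}\}_{k\in\mathcal{S}}$ satisfying \eqref{CUSBD03b}--\eqref{CUSBD03d}, set $\mu_{k}=1$ for $k\in\mathcal{S}$ and $\mu_{k}=0$ otherwise, pick any unit-norm $\mathbf{w}_{k}$ for $k\notin\mathcal{S}$, and put $p_{k}=0$ for $k\notin\mathcal{S}$. Then the interference sum $\sum_{l\neq k,\,l\in\mathcal{K}}p_{l}|\overline{\mathbf{h}}_{k}^{H}\mathbf{w}_{l}|^{2}$ in \eqref{CUSBD06} collapses to the sum over $\mathcal{S}$, so $\overrightarrow{\gamma}_{k}=\gamma_{k}$ and $\overrightarrow{R}_{k}=R_{k}$ for every $k\in\mathcal{S}$. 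Constraint \eqref{CUSBD05d} holds since $\mu_{k}r_{k}=r_{k}\le R_{k}$ for $k\in\mathcal{S}$ and $\mu_{k}r_{k}=0\le\overrightarrow{R}_{k}$ for $k\notin\mathcal{S}$, while $\sum_{k}\mu_{k}=|\mathcal{S}|\le BN_{\mathrm{t}}$; constraint \eqref{CUSBD05e} reduces to \eqref{CUSBD03d} because the added terms vanish; and the objective $\sum_{k}\mu_{k}\overrightarrow{R}_{k}=\sum_{k\in\mathcal{S}}R_{k}$ is unchanged. Hence the optimal value of \eqref{CUSBD05} is at least that of \eqref{CUSBD03}.

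For the reverse inequality I would take an optimal solution $\{\mu_{k}^{\star},p_{k}^{\star},\mathbf{w}_{k}^{\star}\}$ of \eqref{CUSBD05} and define $\mathcal{S}=\{k\in\mathcal{K}:\mu_{k}^{\star}=1\}$. By Theorem~\ref{CUSBDT01}, $p_{k}^{\star}=0$ for every $k\notin\mathcal{S}$ (and $p_{k}^{\star}>0$ for $k\in\mathcal{S}$), so once more the unscheduled users contribute neither interference nor objective mass, giving $\overrightarrow{R}_{k}=R_{k}$ for $k\in\mathcal{S}$. The restriction $\{p_{k}^{\star},\mathbf{w}_{k}^{\star}\}_{k\in\mathcal{S}}$ with scheduled set $\mathcal{S}$ then satisfies \eqref{CUSBD03b} (as $p_{k}^{\star}>0$), \eqref{CUSBD03c} (as $r_{k}\le\overrightarrow{R}_{k}=R_{k}$ and $|\mathcal{S}|=\sum_{k}\mu_{k}^{\star}\le BN_{\mathrm{t}}$), and \eqref{CUSBD03d} (dropping the zero-power terms from \eqref{CUSBD05e}), with objective value $\sum_{k\in\mathcal{S}}R_{k}=\sum_{k\in\mathcal{K}}\mu_{k}^{\star}\overrightarrow{R}_{k}$. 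Combining the two inequalities shows the problems share the same optimal value, and the constructions above are mutual inverses on optimizers, establishing the equivalence.

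The step I expect to need the most care is the reverse direction, precisely because \eqref{CUSBD05} may admit feasible (non-optimal) points with $\mu_{k}=0$ yet $p_{k}>0$; the clean restriction argument only goes through after Theorem~\ref{CUSBDT01} is invoked to force zero power on unscheduled users at optimality. I would also check the degenerate case $\mathcal{S}=\emptyset$ (objective zero, matched trivially) and note that the feasibility assumption discussed after \eqref{CUSBD04} is not needed here: the equivalence is a pure reformulation and holds whether or not every $k\in\mathcal{K}$ can individually meet its rate target. Everything else is bookkeeping — matching interference sums, per-BS power constraints, and cardinality constraints term by term.
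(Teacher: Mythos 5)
Your proof is correct and rests on exactly the same pillar as the paper's: the paper's own proof of Proposition~\ref{CUSBDP01} is a one-line appeal to Theorem~\ref{CUSBDT01}, and your two-direction value-matching argument is simply the careful expansion of that appeal (lifting feasible points of~\eqref{CUSBD03} with zero padding, and restricting optimizers of~\eqref{CUSBD05} after Theorem~\ref{CUSBDT01} forces $p_{k}=0$ for unscheduled users). No gap; you have just written out what the authors declared obvious.
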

\begin{IEEEproof}
The conclusion is obvious according to the conclusion obtained in Theorem~\ref{CUSBDT01}.
\end{IEEEproof}

According to the conclusion in Proposition~\ref{CUSBDP01}, in the sequel, we focus on addressing problem~\eqref{CUSBD05} instead of problem~\eqref{CUSBD03}. In problem~\eqref{CUSBD05}, the uncertain set $\mathcal{S}$ is replaced with a series of binary variables $\mu_{k}$, $\forall k\in\mathcal{K}$. However, problem~\eqref{CUSBD05} is a mixed-integer programming problem, which is still difficult to obtain its global optimum. In what follows, we resort to the uplink-downlink duality theory to address problem~\eqref{CUSBD05}.

\subsection{Dual description of problem~\eqref{CUSBD05}}

The research in~\cite{TITZhang2012} showed that the complex downlink resource allocation problem can be transformed into a relative tractable uplink optimization problem for wireless communication systems. This is because there is an analytical solution of the beamforming vectors of the uplink optimization problem. Motivated by this observation, in the sequel, we focus on obtaining a dual description of problem~\eqref{CUSBD05}, such that the solution to problem~\eqref{CUSBD05} can be obtained. Introducing auxiliary variables $\lambda_{b}\geq 0$, $\forall b\in\mathcal{B}$ with $\sum\limits_{b\in\mathcal{B}}\lambda_{b}\neq 0$, we transform problem~\eqref{CUSBD05} into the following form
\begin{subequations}\label{CUSBD10}
\begin{align}
&\psi\left(\bm{\lambda}\right)=\max_{\left\{\mu_{k},p_{k},\mathbf{w}_{k}\right\}} \sum\limits_{k\in\mathcal{K}}\mu_{k}\overrightarrow{R}_{k}, \label{CUSBD10a}\\
\mathrm{s.t.}~&\mu_{k}\in\left\{0,1\right\}, \forall k\in\mathcal{K},\label{CUSBD10b}\\
&p_{k}\geq 0, \left\|\mathbf{w}_{k}\right\|_{2}=1, \forall k\in\mathcal{K},\label{CUSBD10c}\\
&\mu_{k}r_{k}\leq \overrightarrow{R}_{k}, \forall k\in\mathcal{K},\sum\limits_{k\in\mathcal{K}}\mu_{k}\leq BN_{\mathrm{t}},\label{CUSBD10d}\\
&\sum\limits_{b\in\mathcal{B}}\sum\limits_{k\in\mathcal{K}}\lambda_{b}p_{k}\left\|\mathbf{Q}_{b}\mathbf{w}_{k}\right\|_{2}^{2}\leq \sum\limits_{b\in\mathcal{B}}\lambda_{b}P_{b}.\label{CUSBD10e}
\end{align}
\end{subequations}
where $\bm{\lambda}=\{\lambda_{1},\lambda_{2},\cdots,\lambda_{B}\}$. Note that problem~\eqref{CUSBD10} is the problem of coordinated user scheduling and beamforming design for joint transmission networks under the total power constraint. It is not difficult to find that a feasible solution to problem~\eqref{CUSBD05} is also a feasible solution to problem~\eqref{CUSBD10}~\cite{TITZhang2012}. Accordingly, the optimal solution to problem~\eqref{CUSBD05} can be obtained via solving problem~$\min\limits_{\bm{\lambda}}\psi\left(\bm{\lambda}\right)$~\cite{TCOMHe2015}. Solving problem~$\min\limits_{\bm{\lambda}}\psi\left(\bm{\lambda}\right)$ can be achieved via the sub-gradient method, which guarantees to obtain the optimal solution~\cite{BookBorwein2006}. The sub-gradient of function $\psi\left(\bm{\lambda}\right)$ is given by the following Proposition.
\begin{proposition}\label{CUSBDP02}
The sub-gradient $\mathbf{g}=\left[g_{1},g_{2},\cdots,g_{B}\right]$ of function $\psi\left(\bm{\lambda}\right)$ with a fixed parameter $\bm{\lambda}$ is given by $g_{b}=P_{b}-\sum\limits_{k\in\mathcal{K}}p_{k}\left\|\mathbf{Q}_{b}\mathbf{w}_{k}\right\|_{2}^{2}, \forall b\in\mathcal{B}$.
\end{proposition}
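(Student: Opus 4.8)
The plan is to recognize $\psi(\bm{\lambda})$ as a pointwise maximum (over the feasible set of problem~\eqref{CUSBD10}) of functions that are affine in $\bm{\lambda}$, and then invoke the standard fact that a subgradient of such a supremum at a point $\bm{\lambda}$ is the gradient of whichever affine function attains the maximum there. Concretely, for any fixed feasible tuple $\left(\{\mu_k\},\{p_k\},\{\mathbf{w}_k\}\right)$ the objective $\sum_{k\in\mathcal{K}}\mu_k\overrightarrow{R}_k$ does not depend on $\bm{\lambda}$ at all, while the only place $\bm{\lambda}$ enters problem~\eqref{CUSBD10} is the single aggregated power constraint~\eqref{CUSBD10e}. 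So I would first argue that $\psi(\bm{\lambda})$ is convex in $\bm{\lambda}$: writing $\psi(\bm{\lambda}) = \sup \big\{ \sum_k \mu_k \overrightarrow{R}_k : \text{\eqref{CUSBD10b}--\eqref{CUSBD10d} hold and } \sum_{b}\lambda_b\big(P_b - \sum_k p_k\|\mathbf{Q}_b\mathbf{w}_k\|_2^2\big)\ge 0 \big\}$, each admissible tuple contributes a constant function of $\bm{\lambda}$ on the (halfspace-shaped) region where it is feasible, and the supremum of such functions is convex; alternatively one can cite that $\psi$ is the optimal value of a problem whose constraint relaxes monotonically as described, which is the perspective already used in the text via~\cite{TITZhang2012,TCOMHe2015}.

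Next I would take a maximizer $\left(\{\mu_k^\star\},\{p_k^\star\},\{\mathbf{w}_k^\star\}\right)$ attaining $\psi(\bm{\lambda})$ for the given $\bm{\lambda}$, and define $g_b := P_b - \sum_{k\in\mathcal{K}} p_k^\star \|\mathbf{Q}_b \mathbf{w}_k^\star\|_2^2$. The key step is the subgradient inequality: for any other $\bm{\lambda}' \ge \mathbf{0}$ with $\sum_b \lambda_b' \neq 0$, I must show $\psi(\bm{\lambda}') \ge \psi(\bm{\lambda}) + \mathbf{g}^{\mathrm{T}}(\bm{\lambda}' - \bm{\lambda})$. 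Here is where one has to be slightly careful. If $\sum_b \lambda_b g_b \ge 0$ — which holds at an optimal $\bm{\lambda}$ and more generally whenever the per-BS constraints~\eqref{CUSBD05e} are all satisfied by the maximizer, i.e. $g_b \ge 0$ for all $b$ — then the same tuple $\left(\{\mu_k^\star\},\{p_k^\star\},\{\mathbf{w}_k^\star\}\right)$ is feasible for problem~\eqref{CUSBD10} with parameter $\bm{\lambda}'$ as well, because $\sum_b \lambda_b' g_b = \sum_b \lambda_b' (P_b - \sum_k p_k^\star\|\mathbf{Q}_b\mathbf{w}_k^\star\|_2^2) \ge 0$ when $g_b\ge 0$ and $\lambda_b'\ge 0$; hence $\psi(\bm{\lambda}') \ge \sum_k \mu_k^\star \overrightarrow{R}_k^\star = \psi(\bm{\lambda})$, and since also $\mathbf{g}^{\mathrm{T}}(\bm{\lambda}'-\bm{\lambda}) = \sum_b \lambda_b' g_b - \sum_b \lambda_b g_b$, one gets the subgradient inequality provided $\sum_b \lambda_b g_b \le 0$; combined with $\sum_b \lambda_b g_b \ge 0$ this pins down complementary slackness $\sum_b \lambda_b g_b = 0$ and yields $\psi(\bm{\lambda}') \ge \psi(\bm{\lambda}) + \mathbf{g}^{\mathrm{T}}(\bm{\lambda}'-\bm{\lambda})$ cleanly. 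I would therefore present the argument at (or driven toward) the optimum of $\min_{\bm{\lambda}}\psi(\bm{\lambda})$, where strong duality with problem~\eqref{CUSBD05} holds and the maximizer of the inner problem satisfies all original per-BS power constraints, so that $g_b \ge 0$ and $\sum_b \lambda_b g_b = 0$ are both available.

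The main obstacle I anticipate is exactly this sign/feasibility bookkeeping: establishing that the inner maximizer attained at the relevant $\bm{\lambda}$ respects $\sum_b \lambda_b g_b = 0$ (complementary slackness on the aggregated constraint) so that the affine lower bound passes through $\psi(\bm{\lambda})$ with the stated slope, rather than merely lying below it. A secondary technical point is that the inner supremum must actually be attained — the feasible set of problem~\eqref{CUSBD10} is compact in $(\{\mathbf{w}_k\})$ (unit sphere) and the relevant powers are bounded via~\eqref{CUSBD10e}, while $\{\mu_k\}$ ranges over a finite set, so a maximizer exists and the supremum-of-affines argument is legitimate; I would note this in one line. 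Once these are in place, reading off $g_b = P_b - \sum_{k\in\mathcal{K}} p_k\|\mathbf{Q}_b\mathbf{w}_k\|_2^2$ as the $b$-th component of the subgradient is immediate, matching the statement.
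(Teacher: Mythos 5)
First, a point of reference: the paper offers no proof of Proposition~\ref{CUSBDP02} at all --- the formula is simply asserted and then consumed in Step~6 of Algorithm~\ref{CUSBDA01} via the update \eqref{CUSBD40} --- so your proposal cannot be checked against an argument in the text; it has to stand alone, and as written it does not close. The decisive gap is in your last step. Having shown (under your hypothesis $g_b\ge 0$ for all $b$) that the maximizer for $\bm{\lambda}$ remains feasible for $\bm{\lambda}'$, so that $\psi(\bm{\lambda}')\ge\psi(\bm{\lambda})$, the subgradient inequality still requires $\mathbf{g}^{\mathrm{T}}(\bm{\lambda}'-\bm{\lambda})=\sum_{b}\lambda_b' g_b-\sum_{b}\lambda_b g_b\le 0$. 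With the complementary slackness $\sum_{b}\lambda_b g_b=0$ that you invoke, this becomes $\sum_{b}\lambda_b' g_b\le 0$ for \emph{every} admissible $\bm{\lambda}'\ge\mathbf{0}$, which together with $g_b\ge 0$ forces $g_b=0$ for all $b$. In other words, your argument establishes the inequality only in the degenerate case $\mathbf{g}=\mathbf{0}$; at any $\bm{\lambda}$ where some per-BS constraint is slack --- exactly the situation in which the subgradient step is supposed to move $\bm{\lambda}$ --- it proves nothing. The convexity step has a parallel flaw: for a fixed feasible tuple, the map sending $\bm{\lambda}$ to the objective value where \eqref{CUSBD10e} holds and to $-\infty$ elsewhere is constant on a halfspace, hence \emph{concave} rather than affine, and a pointwise supremum of such functions need not be convex.

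The statement that your envelope/Danskin idea actually proves in one line concerns the Lagrangian-type function $d(\bm{\lambda})=\max\bigl\{\sum_{k}\mu_k\overrightarrow{R}_k+\sum_{b}\lambda_b\bigl(P_b-\sum_{k}p_k\|\mathbf{Q}_b\mathbf{w}_k\|_2^2\bigr)\bigr\}$ subject to \eqref{CUSBD10b}--\eqref{CUSBD10d}: this is a pointwise supremum of functions affine in $\bm{\lambda}$, hence convex, and the coefficient vector of the affine function attaining the supremum is precisely $g_b=P_b-\sum_{k}p_k\|\mathbf{Q}_b\mathbf{w}_k\|_2^2$. For $\psi$ as literally defined in \eqref{CUSBD10}, where $\bm{\lambda}$ enters only through the aggregated constraint, the claimed vector need not be a subgradient: in the analogous scalar model $\max\{\sqrt{x_1}:\lambda_1x_1+\lambda_2x_2\le\lambda_1+\lambda_2,\;x\ge 0\}$ one gets $\psi(\bm{\lambda})=\sqrt{1+\lambda_2/\lambda_1}$, which is strictly concave in $\lambda_2$, and the inequality $\psi(\bm{\lambda}')\ge\psi(\bm{\lambda})+\mathbf{g}^{\mathrm{T}}(\bm{\lambda}'-\bm{\lambda})$ fails at $\bm{\lambda}=(1,1)$, $\bm{\lambda}'=(1,2)$. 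The repair is therefore either to prove the proposition for $d$ and note that $\min_{\bm{\lambda}}d$ and $\min_{\bm{\lambda}}\psi$ agree at the optimum (which is all Algorithm~\ref{CUSBDA01} needs), or to state explicitly the hypotheses under which $\psi$ and $d$ coincide; without one of these, both your argument and the unqualified statement remain gapped.
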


In what follows, we pay our attention to solving problem~\eqref{CUSBD10} with fixed $\bm{\lambda}$. Given $\bm{\lambda}$, there are still two challenges, i.e., the non-convexity of the $0$-$1$ constraint and the user rate $\overrightarrow{R}_{k}$. We begin with obtaining a virtual unplink dual form of problem~\eqref{CUSBD10}, which is described in the following Theorem.
\begin{theorem}\label{CUSBDT02}
Given $\bm{\lambda}$, the downlink optimization problem~\eqref{CUSBD10} is dual to the following virtual uplink optimization problem~\eqref{CUSBD11},  i.e.,
\begin{subequations}\label{CUSBD11}
\begin{align}
&\max_{\left\{\mu_{k},q_{k},\mathbf{w}_{k}\right\}} \sum\limits_{k\in\mathcal{K}}\mu_{k}\overleftarrow{R}_{k}, \label{CUSBD11a}\\
\mathrm{s.t.}~&\mu_{k}\in\left\{0,1\right\}, \forall k\in\mathcal{K},\label{CUSBD11b}\\
&q_{k}\geq 0, \left\|\mathbf{w}_{k}\right\|_{2}=1, \forall k\in\mathcal{K},\label{CUSBD11c}\\
&\mu_{k}r_{k}\leq \overleftarrow{R}_{k}, \forall k\in\mathcal{K},\sum\limits_{k\in\mathcal{K}}\mu_{k}\leq BN_{\mathrm{t}},\label{CUSBD11d}\\
&\sum\limits_{k\in\mathcal{K}}q_{k}\leq \sum\limits_{b\in\mathcal{B}}\lambda_{b}P_{b}.\label{CUSBD11e}
\end{align}
\end{subequations}
where $q_{k}$ is the virtual uplink transmit power of the $k$-th user and $\overleftarrow{R}_{k}$ is calculated as $\overleftarrow{R}_{k}=\log_{2}\left(1+\overleftarrow{\gamma}_{k}\right)$ with $\overleftarrow{\gamma}_{k}$ being the virtual uplink SINR of the $k$-th user, calculated as
\begin{equation}\label{CUSBD12}
\overleftarrow{\gamma}_{k}=\frac{q_{k}\left|\overline{\mathbf{h}}_{k}^{H}\mathbf{w}_{k}\right|^{2}}
{\sum\limits_{l\neq k,l\in\mathcal{K}}q_{l}\left|\overline{\mathbf{h}}_{l}^{H}\mathbf{w}_{k}\right|^{2}+\left\|\mathbf{Q}\mathbf{w}_{k}\right\|_{2}^{2}}.
\end{equation}
where $\mathbf{Q}=\sum\limits_{b\in\mathcal{B}}\sqrt{\lambda_{b}}\mathbf{Q}_{b}$.
\end{theorem}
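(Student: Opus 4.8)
The plan is to establish the claimed duality through an SINR-region argument: I will show that for every fixed scheduling vector $\bm{\mu}$ and every fixed collection of unit-norm beamformers $\{\mathbf{w}_{k}\}$, the set of per-user SINR tuples attainable in the downlink problem~\eqref{CUSBD10} under the weighted sum-power budget~\eqref{CUSBD10e} is exactly the set attainable in the virtual uplink problem~\eqref{CUSBD11} under the budget~\eqref{CUSBD11e}. Since the two problems have the same objective $\sum_{k}\mu_{k}\log_{2}(1+\gamma_{k})$ (a componentwise increasing function of the SINRs), the same binary and cardinality constraints, and the same QoS constraints $\mu_{k}r_{k}\le\log_{2}(1+\gamma_{k})$, equality of the two SINR regions immediately forces equality of the optimal values and yields an explicit correspondence between optimal solutions. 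An unscheduled user ($\mu_{k}=0$) is assigned zero power and causes no interference in either direction, so it can be removed; hence I may work only on the index subset $\mathcal{S}=\{k:\mu_{k}=1\}$.

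First I would simplify constraint~\eqref{CUSBD10e}: because the $\mathbf{Q}_{b}$ are orthogonal projections onto disjoint coordinate blocks, $\|\mathbf{Q}\mathbf{w}_{k}\|_{2}^{2}=\sum_{b}\lambda_{b}\|\mathbf{Q}_{b}\mathbf{w}_{k}\|_{2}^{2}$, so~\eqref{CUSBD10e} becomes $\sum_{k}p_{k}\|\mathbf{Q}\mathbf{w}_{k}\|_{2}^{2}\le\beta$ with $\beta:=\sum_{b}\lambda_{b}P_{b}$, and this same quantity $\|\mathbf{Q}\mathbf{w}_{k}\|_{2}^{2}$ is precisely the effective noise of user $k$ in the uplink SINR~\eqref{CUSBD12}. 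Next, fixing an SINR target $\bm{\gamma}\ge\mathbf{0}$, I rewrite the feasibility conditions as linear systems. Put $\mathbf{D}=\mathrm{diag}\big(\gamma_{k}/|\overline{\mathbf{h}}_{k}^{H}\mathbf{w}_{k}|^{2}\big)$, let $\mathbf{\Psi}$ have entries $\Psi_{kl}=|\overline{\mathbf{h}}_{k}^{H}\mathbf{w}_{l}|^{2}$ for $l\neq k$ and zeros on the diagonal, and let $\bm{\sigma}$ have entries $\sigma_{k}=\|\mathbf{Q}\mathbf{w}_{k}\|_{2}^{2}$. Then $\overrightarrow{\gamma}_{k}\ge\gamma_{k}$, $\mathbf{p}\ge\mathbf{0}$ is equivalent to $\mathbf{p}\ge\mathbf{D}\mathbf{\Psi}\mathbf{p}+\mathbf{D}\mathbf{1}$, while $\overleftarrow{\gamma}_{k}\ge\gamma_{k}$, $\mathbf{q}\ge\mathbf{0}$ is equivalent to $\mathbf{q}\ge\mathbf{D}\mathbf{\Psi}^{T}\mathbf{q}+\mathbf{D}\bm{\sigma}$, the transpose appearing because the uplink cross-gain in~\eqref{CUSBD12} is $|\overline{\mathbf{h}}_{l}^{H}\mathbf{w}_{k}|^{2}=\Psi_{lk}$.

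The core of the proof is nonnegative-matrix (Perron--Frobenius) theory. Using $\rho(\mathbf{D}\mathbf{\Psi})=\rho(\mathbf{\Psi}\mathbf{D})=\rho((\mathbf{D}\mathbf{\Psi}^{T})^{T})=\rho(\mathbf{D}\mathbf{\Psi}^{T})$, the target $\bm{\gamma}$ is attainable in one direction iff it is attainable in the other, iff $\rho(\mathbf{D}\mathbf{\Psi})<1$; and in that case the componentwise-minimal feasible powers are $\mathbf{p}^{\star}=(\mathbf{I}-\mathbf{D}\mathbf{\Psi})^{-1}\mathbf{D}\mathbf{1}$ and $\mathbf{q}^{\star}=(\mathbf{I}-\mathbf{D}\mathbf{\Psi}^{T})^{-1}\mathbf{D}\bm{\sigma}$, so the minimum downlink weighted power is $\Phi_{\mathrm{DL}}(\bm{\gamma})=\bm{\sigma}^{T}\mathbf{p}^{\star}$ and the minimum uplink total power is $\Phi_{\mathrm{UL}}(\bm{\gamma})=\mathbf{1}^{T}\mathbf{q}^{\star}$. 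Finally, transposing the scalar $\bm{\sigma}^{T}(\mathbf{I}-\mathbf{D}\mathbf{\Psi})^{-1}\mathbf{D}\mathbf{1}$ and applying the identity $\mathbf{D}(\mathbf{I}-\mathbf{\Psi}^{T}\mathbf{D})^{-1}=(\mathbf{D}^{-1}-\mathbf{\Psi}^{T})^{-1}=(\mathbf{I}-\mathbf{D}\mathbf{\Psi}^{T})^{-1}\mathbf{D}$ shows $\Phi_{\mathrm{DL}}(\bm{\gamma})=\Phi_{\mathrm{UL}}(\bm{\gamma})$. Hence a target is downlink-feasible ($\Phi_{\mathrm{DL}}\le\beta$) iff it is uplink-feasible ($\Phi_{\mathrm{UL}}\le\beta$), the two SINR regions coincide, and together with the matching objectives and constraints this proves~\eqref{CUSBD10} and~\eqref{CUSBD11} are dual.

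I expect the main difficulty to be the careful bookkeeping around the asymmetric noise terms --- unit noise in the downlink versus $\|\mathbf{Q}\mathbf{w}_{k}\|_{2}^{2}$ in the uplink --- and making the minimal-power / spectral-radius step fully rigorous: one must verify that minimizing the weighted budget over all feasible $\mathbf{p}$ is indeed attained at $\mathbf{p}^{\star}$ (which rests on the monotonicity $\mathbf{p}\ge\mathbf{p}^{\star}$ for every feasible $\mathbf{p}$ together with $\bm{\sigma}\ge\mathbf{0}$), and handle the degenerate cases in which some $\gamma_{k}=0$ or in which the budget is slack, where one invokes the same power-scaling argument used in Theorem~\ref{CUSBDT01} to see the constraint may be taken tight at the optimum.
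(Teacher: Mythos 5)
Your proposal is correct, and it pursues the same overarching strategy as the paper's proof --- showing that for fixed beamformers the downlink and virtual uplink have identical SINR feasible regions under the matched power budgets, so that the two problems (which share the objective, the binary/cardinality constraints, and the QoS constraints as functions of $(\bm{\mu},\bm{\gamma})$) have equal optimal values. The execution, however, is genuinely different. The paper simply sets $\overrightarrow{\gamma}_{k}=\overleftarrow{\gamma}_{k}$, cross-multiplies to get~\eqref{CUSBD13}, adds the diagonal term $q_{k}p_{k}\lvert\overline{\mathbf{h}}_{k}^{H}\mathbf{w}_{k}\rvert^{2}$ to both sides, and sums over $k$ so that the two double sums cancel by swapping indices, yielding the single identity $\sum_{k}q_{k}=\sum_{k}p_{k}\lVert\mathbf{Q}\mathbf{w}_{k}\rVert_{2}^{2}$; from this it asserts the equality of SINR regions. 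You instead set up the feasibility systems $\mathbf{p}\ge\mathbf{D}\mathbf{\Psi}\mathbf{p}+\mathbf{D}\mathbf{1}$ and $\mathbf{q}\ge\mathbf{D}\mathbf{\Psi}^{T}\mathbf{q}+\mathbf{D}\bm{\sigma}$, invoke $\rho(\mathbf{D}\mathbf{\Psi})=\rho(\mathbf{D}\mathbf{\Psi}^{T})$ to get simultaneous solvability, and prove $\bm{\sigma}^{T}\mathbf{p}^{\star}=\mathbf{1}^{T}\mathbf{q}^{\star}$ by the transpose identity $\mathbf{D}(\mathbf{I}-\mathbf{\Psi}^{T}\mathbf{D})^{-1}=(\mathbf{I}-\mathbf{D}\mathbf{\Psi}^{T})^{-1}\mathbf{D}$. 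What your route buys is precisely the piece the paper glosses over: the paper's computation only shows that \emph{if} both links attain the same SINR tuple then the total powers agree; it does not establish that every downlink-achievable tuple admits a nonnegative uplink power vector attaining it (and conversely), nor that the budget can be taken tight at the optimum. Your Perron--Frobenius argument, the componentwise minimality of $\mathbf{p}^{\star}$ and $\mathbf{q}^{\star}$, and your explicit treatment of the degenerate cases close exactly those gaps, at the cost of a longer argument; the paper's summation trick is shorter and, as a bonus, directly produces the eigensystem machinery reused later in Proposition~\ref{CUSBDP04}.
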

\begin{IEEEproof}
To proof the duality of problem~\eqref{CUSBD10} and problem~\eqref{CUSBD11}, we try to prove that they have the same SINR feasible region. Let $q_{k}$ be the virtual uplink transmitting power of the $k$-th user, and the virtual uplink SINR of the $k$-th user is given by~\eqref{CUSBD12}. Let $\overrightarrow{\gamma}_{k}=\overleftarrow{\gamma}_{k}$, then we have
\begin{equation}\label{CUSBD13}
\sum\limits_{l\neq k,l\in\mathcal{K}}q_{k}p_{l}\left|\overline{\mathbf{h}}_{k}^{H}\mathbf{w}_{l}\right|^{2}+q_{k}
=\sum\limits_{l\neq k,l\in\mathcal{K}}p_{k}q_{l}\left|\overline{\mathbf{h}}_{l}^{H}\mathbf{w}_{k}\right|^{2}+p_{k}\left\|\mathbf{Q}\mathbf{w}_{k}\right\|_{2}^{2}
\end{equation}
Add the item $q_{k}p_{k}\left|\overline{\mathbf{h}}_{k}^{H}\mathbf{w}_{k}\right|^{2}$ on both sides of~\eqref{CUSBD13} and summing with respect to the subscript $k$, we have
\begin{equation}\label{CUSBD14}
\sum\limits_{k\in\mathcal{K}}q_{k}
=\sum\limits_{k\in\mathcal{K}}p_{k}\left\|\mathbf{Q}\mathbf{w}_{k}\right\|_{2}^{2}
\end{equation}
Recalling $\mathbf{Q}=\sum\limits_{b\in\mathcal{B}}\sqrt{\lambda_{b}}\mathbf{Q}_{b}$, we further have
\begin{equation}\label{CUSBD15}
\sum\limits_{k\in\mathcal{K}}q_{k}
=\sum\limits_{b\in\mathcal{B}}\sum\limits_{k\in\mathcal{K}}\lambda_{b}p_{k}\left\|\mathbf{Q}_{b}\mathbf{w}_{k}\right\|_{2}^{2}
\end{equation}
This implies that under the same total transmit power constraint, the downlink and the virtual uplink have the same SINR region. Therefore, problem~\eqref{CUSBD10} and problem~\eqref{CUSBD11} are dual problems.
\end{IEEEproof}

\begin{remark}\label{Remark03}
The conclusion given in Theorem~\ref{CUSBDT02} is an extension of Theorem~1, which gives out a dual description of the power minimization problem of the downlink multiuser system with per-antenna power constraint in~\cite{TSPYu2007}. However, in this paper, we obtain the dual description of the sum rate maximization problem of the downlink of multicell joint transmission systems with per-BS power constraint.
\end{remark}
\begin{proposition}\label{CUSBDP03}
In problem~\eqref{CUSBD11}, if $\mu_{k}=1$, then $q_{k}>0$, otherwise, $q_{k}=0$, $k\in\mathcal{K}$.
\end{proposition}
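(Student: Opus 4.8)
The plan is to mirror the argument used for Theorem~\ref{CUSBDT01}, now applied to the virtual uplink problem~\eqref{CUSBD11}. First I would dispatch the easy direction: if $\mu_{k}=1$, then constraint~\eqref{CUSBD11d} forces $\overleftarrow{R}_{k}\geq r_{k}>0$, hence $\overleftarrow{\gamma}_{k}>0$; since the denominator $\sum_{l\neq k,l\in\mathcal{K}}q_{l}|\overline{\mathbf{h}}_{l}^{H}\mathbf{w}_{k}|^{2}+\|\mathbf{Q}\mathbf{w}_{k}\|_{2}^{2}$ is strictly positive (the term $\|\mathbf{Q}\mathbf{w}_{k}\|_{2}^{2}>0$ because $\sum_{b}\lambda_{b}\neq 0$ and $\|\mathbf{w}_{k}\|_{2}=1$), the numerator $q_{k}|\overline{\mathbf{h}}_{k}^{H}\mathbf{w}_{k}|^{2}$ must be positive, which gives $q_{k}>0$.

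For the converse direction, I would argue by contradiction exactly as in the proof of Theorem~\ref{CUSBDT01}. Suppose $\{q_{k}^{'}\},\{\mathbf{w}_{k}^{'}\}$ with binary vector $\bm{\mu}^{'}$ is optimal for~\eqref{CUSBD11}, but some unscheduled user $m$ (i.e. $\mu_{m}^{'}=0$) has $q_{m}^{'}\neq 0$, while all other unscheduled users carry zero power. Since $\sum_{k\in\mathcal{S}}q_{k}^{'}<\sum_{k\in\mathcal{S}}q_{k}^{'}+q_{m}^{'}\leq\sum_{b\in\mathcal{B}}\lambda_{b}P_{b}$, there is slack in the sum-power budget~\eqref{CUSBD11e}. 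Then set $q_{k}^{''}=\alpha q_{k}^{'}$ for $k\in\mathcal{S}$ with some $\alpha>1$ chosen so that $\sum_{k\in\mathcal{S}}q_{k}^{''}\leq\sum_{b\in\mathcal{B}}\lambda_{b}P_{b}$ still holds, and $q_{m}^{''}=0$, keeping $\mathbf{w}_{k}^{''}=\mathbf{w}_{k}^{'}$ and $\bm{\mu}^{''}=\bm{\mu}^{'}$. Scaling all active powers by $\alpha$ rescales numerator and the interference part of the denominator of~\eqref{CUSBD12} identically, so
\begin{equation}\label{CUSBDP03proof}
\overleftarrow{\gamma}_{k}^{''}=\frac{\alpha q_{k}^{'}\left|\overline{\mathbf{h}}_{k}^{H}\mathbf{w}_{k}^{'}\right|^{2}}
{\sum\limits_{l\neq k,l\in\mathcal{S}}\alpha q_{l}^{'}\left|\overline{\mathbf{h}}_{l}^{H}\mathbf{w}_{k}^{'}\right|^{2}+\left\|\mathbf{Q}\mathbf{w}_{k}^{'}\right\|_{2}^{2}}
>\frac{q_{k}^{'}\left|\overline{\mathbf{h}}_{k}^{H}\mathbf{w}_{k}^{'}\right|^{2}}
{\sum\limits_{l\neq k,l\in\mathcal{S}}q_{l}^{'}\left|\overline{\mathbf{h}}_{l}^{H}\mathbf{w}_{k}^{'}\right|^{2}+q_{m}^{'}\left|\overline{\mathbf{h}}_{m}^{H}\mathbf{w}_{k}^{'}\right|^{2}+\left\|\mathbf{Q}\mathbf{w}_{k}^{'}\right\|_{2}^{2}}=\overleftarrow{\gamma}_{k}^{'},
\end{equation}
since dividing numerator and denominator of the left side by $\alpha$ shrinks the noise term $\|\mathbf{Q}\mathbf{w}_{k}^{'}\|_{2}^{2}/\alpha$ and simultaneously removes the nonnegative interference term contributed by user $m$. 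Hence every active user's rate strictly increases while feasibility of~\eqref{CUSBD11b}--\eqref{CUSBD11e} is preserved (note $\mu_{k}r_{k}\leq\overleftarrow{R}_{k}$ continues to hold since rates only grew), contradicting optimality. Therefore $\mu_{m}^{'}=0$ implies $q_{m}^{'}=0$.

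The only step that needs a little care is checking that the noise-like term $\|\mathbf{Q}\mathbf{w}_{k}\|_{2}^{2}$ in~\eqref{CUSBD12} is strictly positive, so that the easy direction's division argument is valid and so that the strict inequality in~\eqref{CUSBDP03proof} is genuine; this follows from $\mathbf{Q}=\sum_{b\in\mathcal{B}}\sqrt{\lambda_{b}}\mathbf{Q}_{b}$ with $\sum_{b}\lambda_{b}\neq 0$ and the $\mathbf{Q}_{b}$ being complementary projections summing to the identity scaled appropriately, together with $\|\mathbf{w}_{k}\|_{2}=1$. I do not expect any genuine obstacle here — the result is essentially a transcription of Theorem~\ref{CUSBDT01} into the uplink domain, with the per-BS power terms $\|\mathbf{Q}_{b}\mathbf{w}_{k}\|_{2}^{2}$ replaced by the scalar powers $q_{k}$ and the downlink SINR replaced by the uplink SINR~\eqref{CUSBD12}. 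One may also simply invoke Theorem~\ref{CUSBDT01} through the duality established in Theorem~\ref{CUSBDT02}, but giving the direct contradiction argument is cleaner and self-contained.
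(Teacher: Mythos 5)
Your proof is correct and is exactly the argument the paper intends: the paper's own proof of Proposition~\ref{CUSBDP03} is simply ``omitted,'' and your transcription of the contradiction argument from Theorem~\ref{CUSBDT01} into the virtual uplink domain --- slack in the sum-power budget~\eqref{CUSBD11e}, rescaling the active powers by $\alpha>1$, and observing that the effective noise term $\left\|\mathbf{Q}\mathbf{w}_{k}\right\|_{2}^{2}$ shrinks to $\left\|\mathbf{Q}\mathbf{w}_{k}\right\|_{2}^{2}/\alpha$ while the interference from user $m$ disappears --- is precisely the analogue the authors rely on. The one caveat you already flag, strict positivity of $\left\|\mathbf{Q}\mathbf{w}_{k}\right\|_{2}^{2}$ (which could fail if some $\lambda_{b}=0$ and $\mathbf{w}_{k}$ is supported only on those antenna blocks), is at the same level of rigor as the paper's own proof of Theorem~\ref{CUSBDT01}, so there is no genuine gap relative to the paper.
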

\begin{IEEEproof}
The proof is omitted.
\end{IEEEproof}

The other thing is that once the optimal beamforming vectors $\left\{\mathbf{w}_{k}^{\left(*\right)}\right\}$ and virtual uplink power allocation $q_{k}^{\left(*\right)}$ are given, how to calculate the  power allocation $\left\{p_{k}^{\left(*\right)}\right\}$ needs to be addressed, or vice versa, $k\in\mathcal{S}$. For simplicity, let $\mathcal{S}=\left\{1',\cdots,k',\cdots,K'\right\}\subseteq\mathcal{K}$ be the index set of user scheduled\footnote{The mapping the set $\mathcal{S}$ and set $\mathcal{K}$ can be determined according to the scheduling results. For example, $\mathcal{S}=\left\{2,3,4\right\}\subseteq\mathcal{K}=\left\{1,2,3,4,5\right\}$, i.e., $1'=2$, $2'=3$, and $3'=4$.}, $K'\leq K$. The following Proposition is used to address this problem~\cite{TCOMHe2015}.
\begin{proposition}\label{CUSBDP04}
For fixed auxiliary variables $\bm{\lambda}$, let $\left\{\mathbf{w}_{k'}^{\left(*\right)}\right\}$ and $\left\{q_{k'}^{\left(*\right)}\right\}$ be the optimal beamforming vectors and power allocation of the virtual uplink optimization problem~\eqref{CUSBD11}, the optimal power allocation $\left\{p_{k'}^{\left(*\right)}\right\}$ of the downlink optimization problem~\eqref{CUSBD10} is given by
\begin{equation}\label{CUSBD16}
\widetilde{\mathbf{p}}^{\left(*\right)}=\widetilde{\mathbf{Q}}^{\left(*\right)}\widetilde{\mathbf{p}}^{\left(*\right)}, \text{with}, \widetilde{\mathbf{p}}_{K'+1}^{\left(*\right)}=1,
\end{equation}
where $\widetilde{\mathbf{p}}^{\left(*\right)}=\left[\mathbf{p}^{\left(*\right)}\atop 1\right]$, $\mathbf{p}^{\left(*\right)}=\left[p_{1'}^{\left(*\right)},p_{2'}^{\left(*\right)},\cdots,p_{K'}^{\left(*\right)}\right]^{T}$, matrix $\widetilde{\mathbf{Q}}$ is defined as
\begin{equation}\label{CUSBD17}
\widetilde{\mathbf{Q}}^{\left(*\right)}={
\left[
\begin{array}{ccc}
\mathbf{D}^{\left(*\right)}\mathbf{\Psi}^{\left(*\right)} & \mathbf{D}^{\left(*\right)}\mathbf{1}\\
\frac{1}{\sum\limits_{k\in\mathcal{S}}q_{k'}}\bm{\nu}^{T}\mathbf{D}^{\left(*\right)}\mathbf{\Psi}^{\left(*\right)} & \frac{1}{\sum\limits_{k\in\mathcal{S}}q_{k'}}\bm{\nu}^{T}\mathbf{D}^{\left(*\right)}\mathbf{1},
\end{array}
\right ]}
\end{equation}
where $\mathbf{1}$ is a $K'$-dimension all-one vector, matrices $\mathbf{D}^{\left(*\right)}$ and $\mathbf{\Psi}^{\left(*\right)}$ are respectively calculated as
\begin{equation}\label{CUSBD18}
\left[\mathbf{D}^{\left(*\right)}\right]_{k',l'}=\left\{
\begin{aligned}
\frac{\overleftarrow{\gamma}_{k'}^{\left(*\right)}}{\left|\overline{\mathbf{h}}_{k'}^{H}\mathbf{w}_{k'}^{\left(*\right)}\right|^{2}}, k'=l', \\
0\quad\quad,k'\neq l'.
\end{aligned}
\right.
\end{equation}
\begin{equation}\label{CUSBD19}
\left[\mathbf{\Psi}^{\left(*\right)}\right]_{k',l'}=\left\{
\begin{aligned}
 0\quad\quad ,k'=l', \\
\left|\overline{\mathbf{h}}_{k'}^{H}\mathbf{w}_{l'}^{\left(*\right)}\right|^{2}, k'\neq l'.
\end{aligned}
\right.
\end{equation}
where $\overleftarrow{\gamma}_{k'}^{\left(*\right)}$ is the corresponding virtual uplink optimal SINR of the $k'$-th user.
\end{proposition}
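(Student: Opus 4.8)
The plan is to reduce the claim to a linear fixed-point equation by insisting that the downlink reproduce the optimal virtual-uplink SINRs, and then to read off the block structure of $\widetilde{\mathbf{Q}}^{(*)}$ from the resulting equations together with the power-conservation identity obtained inside the proof of Theorem~\ref{CUSBDT02}. First I would observe that, by Theorem~\ref{CUSBDT02}, problems~\eqref{CUSBD10} and~\eqref{CUSBD11} have the same achievable SINR region under the common total-power budget $\sum_{b\in\mathcal{B}}\lambda_{b}P_{b}$ and share the beamformers $\{\mathbf{w}_{k'}^{(*)}\}$, so their optimal values coincide; hence it suffices to find downlink powers $\{p_{k'}^{(*)}\}$ that, together with these beamformers, reproduce the virtual-uplink-optimal SINRs, i.e.\ $\overrightarrow{\gamma}_{k'}=\overleftarrow{\gamma}_{k'}^{(*)}$ for every $k'\in\mathcal{S}$, and such powers are then downlink-optimal. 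For $k'\notin\mathcal{S}$ one has $p_{k'}^{(*)}=q_{k'}^{(*)}=0$ by Theorem~\ref{CUSBDT01} and Proposition~\ref{CUSBDP03}, so only the scheduled users enter the computation.

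Next I would make the equalities $\overrightarrow{\gamma}_{k'}=\overleftarrow{\gamma}_{k'}^{(*)}$ explicit. Clearing the denominator in~\eqref{CUSBD06} restricted to $\mathcal{S}$ gives, for each $k'\in\mathcal{S}$,
\[ p_{k'}^{(*)}\big|\overline{\mathbf{h}}_{k'}^{H}\mathbf{w}_{k'}^{(*)}\big|^{2}=\overleftarrow{\gamma}_{k'}^{(*)}\Big(\sum_{l'\neq k',\,l'\in\mathcal{S}}p_{l'}^{(*)}\big|\overline{\mathbf{h}}_{k'}^{H}\mathbf{w}_{l'}^{(*)}\big|^{2}+1\Big), \]
which, with $\mathbf{D}^{(*)}$ and $\mathbf{\Psi}^{(*)}$ as in~\eqref{CUSBD18}--\eqref{CUSBD19}, is the matrix equation $\mathbf{p}^{(*)}=\mathbf{D}^{(*)}\mathbf{\Psi}^{(*)}\mathbf{p}^{(*)}+\mathbf{D}^{(*)}\mathbf{1}$. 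To pass to the homogeneous form~\eqref{CUSBD16} I would introduce the extra unknown $\widetilde{\mathbf{p}}_{K'+1}^{(*)}$, let it carry the constant term $\mathbf{D}^{(*)}\mathbf{1}$ (this produces the first $K'$ rows of $\widetilde{\mathbf{Q}}^{(*)}$ in~\eqref{CUSBD17}), and impose the normalization $\widetilde{\mathbf{p}}_{K'+1}^{(*)}=1$. For the last row I would invoke the identity~\eqref{CUSBD14}: since the unscheduled users contribute zero on both sides, it restricts to $\sum_{k'\in\mathcal{S}}q_{k'}^{(*)}=\bm{\nu}^{T}\mathbf{p}^{(*)}$, where $\bm{\nu}$ collects the effective noise levels $\|\mathbf{Q}\mathbf{w}_{k'}^{(*)}\|_{2}^{2}$. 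Substituting $\mathbf{p}^{(*)}=\mathbf{D}^{(*)}\big(\mathbf{\Psi}^{(*)}\mathbf{p}^{(*)}+\mathbf{1}\big)$ then gives $1=\frac{1}{\sum_{k'\in\mathcal{S}}q_{k'}^{(*)}}\bm{\nu}^{T}\mathbf{D}^{(*)}\big(\mathbf{\Psi}^{(*)}\mathbf{p}^{(*)}+\mathbf{1}\big)$, which is exactly the bottom row of~\eqref{CUSBD17}; stacking the $K'$ rows and this one yields~\eqref{CUSBD16}.

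Finally I would check well-posedness and optimality. The matrix $\mathbf{D}^{(*)}\mathbf{\Psi}^{(*)}$ is entrywise nonnegative, and because the targets $\overleftarrow{\gamma}_{k'}^{(*)}$ are simultaneously achievable in the downlink, its spectral radius is strictly below one, so $\mathbf{I}-\mathbf{D}^{(*)}\mathbf{\Psi}^{(*)}$ is invertible with nonnegative inverse and the affine system has the unique nonnegative solution $\mathbf{p}^{(*)}=\big(\mathbf{I}-\mathbf{D}^{(*)}\mathbf{\Psi}^{(*)}\big)^{-1}\mathbf{D}^{(*)}\mathbf{1}$; the pair $\big(\mathbf{p}^{(*)},1\big)$ is then the unique (up to the imposed normalization) nonnegative eigenvector of $\widetilde{\mathbf{Q}}^{(*)}$ with eigenvalue one, and by~\eqref{CUSBD14} its total downlink power equals $\sum_{k'\in\mathcal{S}}q_{k'}^{(*)}\le\sum_{b\in\mathcal{B}}\lambda_{b}P_{b}$, so feasibility and optimality follow. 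I expect the main obstacle to be precisely this consistency/well-posedness step: one must confirm that the appended last row does not conflict with the first $K'$ rows — which is guaranteed by~\eqref{CUSBD14} — and that the spectral-radius condition holds so that the recovered $\{p_{k'}^{(*)}\}$ are positive and uniquely determined; the remaining manipulations are routine linear algebra.
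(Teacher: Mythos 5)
Your proposal follows essentially the same route as the paper's proof: you enforce $\overrightarrow{\gamma}_{k'}=\overleftarrow{\gamma}_{k'}^{(*)}$ to obtain the affine system $\mathbf{p}^{(*)}=\mathbf{D}^{(*)}\mathbf{\Psi}^{(*)}\mathbf{p}^{(*)}+\mathbf{D}^{(*)}\mathbf{1}$, use the power-conservation identity~\eqref{CUSBD14} to generate the last row, and assemble the extended eigensystem~\eqref{CUSBD16}, exactly as in the paper's derivation of~\eqref{CUSBD21}--\eqref{CUSBD26}. Your added well-posedness discussion (nonnegativity and the spectral-radius condition ensuring a unique positive solution) is a sound elaboration of what the paper delegates to the citation of Schubert and Boche.
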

\begin{proof}
Combining $\overrightarrow{\gamma}_{k'}=\overleftarrow{\gamma}_{k'}$ with~\eqref{CUSBD14}, we have
\begin{equation}\label{CUSBD20}
\left\{
\begin{aligned}
&\overrightarrow{\gamma}_{k'}=\overleftarrow{\gamma}_{k'}, \quad \forall k'\in\mathcal{S},\\
&\sum\limits_{k'\in\mathcal{S}}p_{k'}\left\|\mathbf{Q}\mathbf{w}_{k'}\right\|_{2}^{2}=\sum\limits_{k\in\mathcal{S}}q_{k'}.
\end{aligned}
\right.
\end{equation}
After some basic operation, we have
\begin{subnumcases} {\label{CUSBD21} }
\mathbf{p}=\mathbf{D}\mathbf{\Psi}\mathbf{p}+\mathbf{D}\mathbf{1}\label{CUSBD21a}\\
\sum\limits_{k'\in\mathcal{S}}p_{k'}\left\|\mathbf{Q}\mathbf{w}_{k'}\right\|_{2}^{2}=\sum\limits_{k\in\mathcal{S}}q_{k'}.\label{CUSBD21b}
\end{subnumcases}
%
where $\mathbf{p}=\left[p_{1'},p_{2'},\cdots,p_{K'}\right]^{T}$, $\mathbf{1}$ is a $K'$-dimension all-one vector, and matrix $\mathbf{\Psi}$ is given by
\begin{equation}\label{CUSBD22}
\left[\mathbf{D}\right]_{k',l'}=\left\{
\begin{aligned}
\frac{\overleftarrow{\gamma}_{k'}}{\left|\overline{\mathbf{h}}_{k'}^{H}\mathbf{w}_{k'}\right|^{2}}, k'=l', \\
0\quad\quad,k'\neq l'.
\end{aligned}
\right.
\end{equation}
\begin{equation}\label{CUSBD23}
\left[\mathbf{\Psi}\right]_{k',l'}=\left\{
\begin{aligned}
 0\quad\quad ,k'=l', \\
\left|\overline{\mathbf{h}}_{k'}^{H}\mathbf{w}_{l'}\right|^{2}, k'\neq l'.
\end{aligned}
\right.
\end{equation}
Let $\bm{\nu}=\left[\left\|\mathbf{Q}\mathbf{w}_{1'}\right\|_{2}^{2},\left\|\mathbf{Q}\mathbf{w}_{2'}\right\|_{2}^{2},\cdots,\left\|\mathbf{Q}\mathbf{w}_{K'}\right\|_{2}^{2}\right]^{T}$ and using~\eqref{CUSBD21a}, we have
\begin{equation}\label{CUSBD24}
1=\frac{1}{\sum\limits_{k\in\mathcal{S}}q_{k'}}\bm{\nu}^{T}\mathbf{D}\mathbf{\Psi}\mathbf{p}+\frac{1}{\sum\limits_{k\in\mathcal{S}}q_{k'}}\bm{\nu}^{T}\mathbf{D}\mathbf{1}
\end{equation}
Defining an extended vector $\widetilde{\mathbf{p}}=\left[\mathbf{p}\atop 1\right]$ and an extended matrix $\widetilde{\mathbf{Q}}$, i.e.,
\begin{equation}\label{CUSBD25}
\widetilde{\mathbf{Q}}={
\left[
\begin{array}{ccc}
\mathbf{D}\mathbf{\Psi} & \mathbf{D}\mathbf{1}\\
\frac{1}{\sum\limits_{k\in\mathcal{S}}q_{k'}}\bm{\nu}^{T}\mathbf{D}\mathbf{\Psi} & \frac{1}{\sum\limits_{k\in\mathcal{S}}q_{k'}}\bm{\nu}^{T}\mathbf{D}\mathbf{1}
\end{array}
\right ]}
\end{equation}
Thus, we have the following eigensystem by combining~\eqref{CUSBD21a} and~\eqref{CUSBD25}:
\begin{equation}\label{CUSBD26}
\widetilde{\mathbf{p}}=\widetilde{\mathbf{Q}}\widetilde{\mathbf{p}}, \text{with}, \widetilde{\mathbf{p}}_{K'+1}=1.
\end{equation}
It is not difficult to find that the optimal power vector $\mathbf{p}$ is obtained as the first $K'$ components of the dominant eigenvector of $\widetilde{\mathbf{Q}}$, which can be scaled so that its last component equals one~\cite{TVTSchubert2004}.
\end{proof}
\begin{proposition}\label{CUSBDP05}
For fixed auxiliary variables $\bm{\lambda}$, let $\left\{\mathbf{w}_{k'}^{\left(*\right)}\right\}$ and $\left\{p_{k'}^{\left(*\right)}\right\}$ be the optimal beamforming vectors and power allocation of the downlink optimization problem~\eqref{CUSBD10}, the optimal power allocation $\left\{q_{k'}^{\left(*\right)}\right\}$ of the virtual uplink optimization problem~\eqref{CUSBD11} is given by
\begin{equation}\label{CUSBD27}
\widetilde{\mathbf{q}}^{\left(*\right)}=\widehat{\mathbf{Q}}^{\left(*\right)}\widetilde{\mathbf{q}}^{\left(*\right)}, \text{with}, \widetilde{\mathbf{q}}_{K'+1}^{\left(*\right)}=1,
\end{equation}
where $\widetilde{\mathbf{q}}^{\left(*\right)}=\left[\mathbf{q}^{\left(*\right)}\atop 1\right]$, $\mathbf{q}^{\left(*\right)}=\left[q_{1'}^{\left(*\right)},q_{2'}^{\left(*\right)},\cdots,q_{K'}^{\left(*\right)}\right]^{T}$, matrix $\widehat{\mathbf{Q}}$ is defined as
\begin{equation}\label{CUSBD28}
\widehat{\mathbf{Q}}^{\left(*\right)}={
\left[
\begin{array}{ccc}
\mathbf{O}^{\left(*\right)}\mathbf{\Phi}^{\left(*\right)} & \mathbf{O}^{\left(*\right)}\mathbf{n}\\
\frac{1}{\sum\limits_{k'\in\mathcal{S}}p_{k'}\left\|\mathbf{Q}\mathbf{w}_{k'}\right\|_{2}^{2}}\mathbf{1}^{T}\mathbf{O}^{\left(*\right)}\mathbf{\Phi}^{\left(*\right)} & \frac{1}{\sum\limits_{k'\in\mathcal{S}}p_{k'}\left\|\mathbf{Q}\mathbf{w}_{k'}\right\|_{2}^{2}}\mathbf{1}^{T}\mathbf{O}^{\left(*\right)}\mathbf{n},
\end{array}
\right ]},
\end{equation}
where $\mathbf{n}=\left[\left\|\mathbf{Q}\mathbf{w}_{1'}^{\left(*\right)}\right\|_{2}^{2},
\left\|\mathbf{Q}\mathbf{w}_{2'}^{\left(*\right)}\right\|_{2}^{2},\cdots,\left\|\mathbf{Q}\mathbf{w}_{K'}^{\left(*\right)}\right\|_{2}^{2}\right]^{T}$, matrices $\mathbf{O}^{\left(*\right)}$ and $\mathbf{\Phi}^{\left(*\right)}$ is calculated as
\begin{equation}\label{CUSBD29}
\left[\mathbf{O}^{\left(*\right)}\right]_{k',l'}=\left\{
\begin{aligned}
\frac{\overrightarrow{\gamma}_{k'}^{\left(*\right)}}{\left|\overline{\mathbf{h}}_{k'}^{H}\mathbf{w}_{k'}^{\left(*\right)}\right|^{2}}, k'=l', \\
0\quad\quad,k'\neq l'.
\end{aligned}
\right.
\end{equation}
\begin{equation}\label{CUSBD30}
\left[\mathbf{\Phi}^{\left(*\right)}\right]_{k',l'}=\left\{
\begin{aligned}
0\quad\quad, k'=l', \\
\left|\overline{\mathbf{h}}_{l'}^{H}\mathbf{w}_{k'}^{\left(*\right)}\right|^{2}, k'\neq l'.
\end{aligned}
\right.
\end{equation}
\end{proposition}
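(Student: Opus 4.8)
The plan is to run the argument for Proposition~\ref{CUSBDP04} in reverse, interchanging the roles of the downlink and the virtual uplink. By Theorem~\ref{CUSBDT02} the two problems share the same SINR feasible region, so for the dual optimal solutions we may impose $\overrightarrow{\gamma}_{k'}^{(*)}=\overleftarrow{\gamma}_{k'}^{(*)}$ for every $k'\in\mathcal{S}$, and by~\eqref{CUSBD14} the two power budgets are tied through $\sum_{k'\in\mathcal{S}}q_{k'}^{(*)}=\sum_{k'\in\mathcal{S}}p_{k'}^{(*)}\|\mathbf{Q}\mathbf{w}_{k'}^{(*)}\|_2^2$, exactly as in~\eqref{CUSBD20}. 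Since $\{\mathbf{w}_{k'}^{(*)}\}$ and $\{p_{k'}^{(*)}\}$ are now the given data, the downlink SINRs $\overrightarrow{\gamma}_{k'}^{(*)}$ are known constants and the unknowns are the virtual uplink powers $\{q_{k'}^{(*)}\}$, $k'\in\mathcal{S}$.

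First I would take the defining expression~\eqref{CUSBD12} for $\overleftarrow{\gamma}_{k'}$, substitute $\overleftarrow{\gamma}_{k'}=\overrightarrow{\gamma}_{k'}^{(*)}$, clear the denominator, and divide through by $|\overline{\mathbf{h}}_{k'}^{H}\mathbf{w}_{k'}^{(*)}|^2$. This isolates $q_{k'}^{(*)}$ as $\frac{\overrightarrow{\gamma}_{k'}^{(*)}}{|\overline{\mathbf{h}}_{k'}^{H}\mathbf{w}_{k'}^{(*)}|^2}$ multiplied by the sum of the cross-interference term $\sum_{l'\neq k',l'\in\mathcal{S}}q_{l'}^{(*)}|\overline{\mathbf{h}}_{l'}^{H}\mathbf{w}_{k'}^{(*)}|^2$ and the noise-like term $\|\mathbf{Q}\mathbf{w}_{k'}^{(*)}\|_2^2$. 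Collecting these identities over $k'\in\mathcal{S}$ gives the vector fixed point $\mathbf{q}^{(*)}=\mathbf{O}^{(*)}\mathbf{\Phi}^{(*)}\mathbf{q}^{(*)}+\mathbf{O}^{(*)}\mathbf{n}$ with $\mathbf{O}^{(*)}$, $\mathbf{\Phi}^{(*)}$, $\mathbf{n}$ precisely as in~\eqref{CUSBD29}--\eqref{CUSBD30}. The one point that needs care here is the index pattern of $\mathbf{\Phi}^{(*)}$: because the uplink interference couples the $l'$-th user's channel with the $k'$-th user's beamformer (opposite to the downlink coupling), the $(k',l')$ entry is $|\overline{\mathbf{h}}_{l'}^{H}\mathbf{w}_{k'}^{(*)}|^2$, i.e.\ the transpose of the pattern appearing in $\mathbf{\Psi}^{(*)}$ of Proposition~\ref{CUSBDP04}; this is exactly the downlink--uplink asymmetry recorded in~\eqref{CUSBD06} versus~\eqref{CUSBD12}.

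Next I would exploit the power-budget coupling. Left-multiplying the fixed-point identity by $\mathbf{1}^{T}$ and dividing by the known constant $\sum_{k'\in\mathcal{S}}p_{k'}^{(*)}\|\mathbf{Q}\mathbf{w}_{k'}^{(*)}\|_2^2$ turns $\mathbf{1}^{T}\mathbf{q}^{(*)}=\sum_{k'\in\mathcal{S}}p_{k'}^{(*)}\|\mathbf{Q}\mathbf{w}_{k'}^{(*)}\|_2^2$ into the scalar relation $1=\frac{1}{\sum_{k'\in\mathcal{S}}p_{k'}^{(*)}\|\mathbf{Q}\mathbf{w}_{k'}^{(*)}\|_2^2}\big(\mathbf{1}^{T}\mathbf{O}^{(*)}\mathbf{\Phi}^{(*)}\mathbf{q}^{(*)}+\mathbf{1}^{T}\mathbf{O}^{(*)}\mathbf{n}\big)$, mirroring~\eqref{CUSBD24}. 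Stacking the vector identity with this scalar equation and introducing the augmented vector $\widetilde{\mathbf{q}}^{(*)}=\left[\mathbf{q}^{(*)}\atop 1\right]$ together with the block matrix $\widehat{\mathbf{Q}}^{(*)}$ of~\eqref{CUSBD28} yields the eigensystem $\widetilde{\mathbf{q}}^{(*)}=\widehat{\mathbf{Q}}^{(*)}\widetilde{\mathbf{q}}^{(*)}$ with $\widetilde{\mathbf{q}}_{K'+1}^{(*)}=1$, which is the claimed~\eqref{CUSBD27}, and $\mathbf{q}^{(*)}$ is read off as the first $K'$ components.

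I expect the only genuinely delicate point — the one I would spell out rather than assert — is the well-posedness of this recovery: one must argue that $\widehat{\mathbf{Q}}^{(*)}$ is a nonnegative, irreducible matrix having $1$ as its Perron eigenvalue, so that $\widetilde{\mathbf{q}}^{(*)}$ is its dominant eigenvector, unique up to scaling, and the normalization $\widetilde{\mathbf{q}}_{K'+1}^{(*)}=1$ pins it down uniquely. This follows from the same Perron--Frobenius reasoning invoked in~\cite{TVTSchubert2004}, together with the strict positivity $p_{k'}^{(*)}>0$ and $q_{k'}^{(*)}>0$ on $\mathcal{S}$ guaranteed by Theorem~\ref{CUSBDT01} and Proposition~\ref{CUSBDP03}; everything else is the bookkeeping already carried out for Proposition~\ref{CUSBDP04}, read ``backwards.''
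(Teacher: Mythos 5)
Your proposal is correct and follows exactly the route the paper intends: the paper's own proof of Proposition~\ref{CUSBDP05} is omitted with the remark that it mirrors Proposition~\ref{CUSBDP04}, and you reconstruct precisely that symmetric argument --- equating $\overleftarrow{\gamma}_{k'}$ with the known $\overrightarrow{\gamma}_{k'}^{(*)}$ in~\eqref{CUSBD12}, forming the fixed point $\mathbf{q}^{(*)}=\mathbf{O}^{(*)}\mathbf{\Phi}^{(*)}\mathbf{q}^{(*)}+\mathbf{O}^{(*)}\mathbf{n}$, and closing the system with the power-budget identity~\eqref{CUSBD14} to obtain the augmented eigensystem~\eqref{CUSBD27}. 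Your explicit note on the transposed index pattern of $\mathbf{\Phi}^{(*)}$ versus $\mathbf{\Psi}^{(*)}$ and the Perron--Frobenius justification of uniqueness are welcome additions that the paper leaves implicit via the citation to~\cite{TVTSchubert2004}, but they do not constitute a different approach.
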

\begin{proof}
The proof of this Lemma is similar with that of Proposition~\ref{CUSBDP04}. The details are omitted.
\end{proof}

In this section, we obtain a tractable form, i.e., problem~\eqref{CUSBD11}, of the original optimization problem~\eqref{CUSBD03} via exploiting the uplink-downlink duality. But, due to the existing of the coupling variables $\left\{q_{k},\mathbf{w}_{k}\right\}$ and the mixed 0-1 integer and continue variables programming, it is till hard to address problem~\eqref{CUSBD11}. In the sequel,  to solve this difficult problem, an alternative optimization algorithm is proposed.

\section{\label{AlgorithmDesign} Design of Optimization Algorithm}

In this section, we focus on designing an alternative optimization algorithm to address problem~\eqref{CUSBD11}. Furthermore, to compare the performance of the proposed algorithm, we exploit the brute-force search and a heuristic user scheduling algorithm with zero-forcing beamforming (ZFBF) to address problem~\eqref{CUSBD11}.
\subsection{Alterative Optimization Algorithm}
Compared to the original problem~\eqref{CUSBD03}, in problem~\eqref{CUSBD11}, the combinational optimization problem has been transformed into a mixed 0-1 and continue variables programming problem, which has a slightly traceable form. However, the coupling of $\left\{q_{k},\mathbf{w}_{k}\right\}$ makes problem~\eqref{CUSBD11} challenging. In what follows, we resort to design an alternative algorithm to release the coupling of $\left\{q_{k},\mathbf{w}_{k}\right\}$. Specifically, we firstly optimize the beamforming vector $\left\{\mathbf{w}_{k}\right\}$ by fixing uplink transmitting powers $\left\{q_{k}\right\}$, and the optimal beamforming vectors $\left\{\mathbf{w}_{k}\right\}$ are the minimum mean square error receiver with fixed $\left\{q_{k}\right\}$, i.e.,~\cite{TWCDah2010}
\begin{equation}\label{CUSBD31}
\mathbf{w}_{k}^{*}=\frac{\left(\mathbf{Q}+\sum\limits_{l\in\mathcal{K}}q_{l}\overline{\mathbf{h}}_{l}\overline{\mathbf{h}}_{l}^{H}\right)^{-1}\overline{\mathbf{h}}_{k}}
{\left\|\left(\mathbf{Q}+\sum\limits_{l\in\mathcal{K}}q_{l}\overline{\mathbf{h}}_{l}\overline{\mathbf{h}}_{l}^{H}\right)^{-1}\overline{\mathbf{h}}_{k}\right\|}.
\end{equation}

Then, in what follows, we focus on addressing problem~\eqref{CUSBD11} with respect to variables $\left\{\mu_{k},q_{k}\right\}$ with fixed beamforming vectors $\left\{\mathbf{w}_{k}\right\}$ and auxiliary variable $\left\{\lambda_{b}\right\}$, $\forall k\in\mathcal{K}$ and $\forall b\in\mathcal{B}$. Given beamforming vectors $\left\{\mathbf{w}_{k}\right\}$ and auxiliary variable $\left\{\lambda_{b}\right\}$, $\forall k\in\mathcal{K}$ and $\forall b\in\mathcal{B}$, problem~\eqref{CUSBD11} is rewritten as follows
\begin{subequations}\label{CUSBD32}
\begin{align}
&\max_{\left\{\mu_{k},q_{k}\right\}} \sum\limits_{k\in\mathcal{K}}\mu_{k}\log_{2}\left(1+\overleftarrow{\gamma}_{k}\right), \label{CUSBD32a}\\
\mathrm{s.t.}~&\mu_{k}\in\left\{0,1\right\}, \forall k\in\mathcal{K},\label{CUSBD32b}\\
&\mu_{k}r_{k}\leq \overleftarrow{R}_{k}, \forall k\in\mathcal{K},\sum\limits_{k\in\mathcal{K}}\mu_{k}\leq BN_{\mathrm{t}},\label{CUSBD32c}\\
&q_{k}\geq 0, \forall k\in\mathcal{K}, \sum\limits_{k\in\mathcal{K}}q_{k}\leq \sum\limits_{b\in\mathcal{B}}\lambda_{b}P_{b}.\label{CUSBD32d}
\end{align}
\end{subequations}
Note that problem~\eqref{CUSBD32} is a mixed integer programming problem with non-convex objective function. Introduce auxiliary variables $\theta_{k}$, $\vartheta_{k}$, $\kappa_{k}$, and exploiting the feature of $\mu_{k}\in\left\{0,1\right\}$, $\forall k\in\mathcal{K}$, we transform problem~\eqref{CUSBD32} into continued non-convex optimization problem, i.e.,
\begin{subequations}\label{CUSBD33}
\begin{align}
&\max_{\left\{\mu_{k},q_{k},\theta_{k},\kappa_{k},\vartheta_{k}\right\}} \sum\limits_{k\in\mathcal{K}}\kappa_{k}^{2}, \label{CUSBD33a}\\
\mathrm{s.t.}~&\sum\limits_{k\in\mathcal{K}}\left(\mu_{k}-\mu_{k}^{2}\right)\leq 0,\label{CUSBD33b}\\
&\psi_{k}\left(\theta_{k},\mathbf{q}\right)-\phi_{k}\left(\theta_{k},\mathbf{q}\right)\leq 0, \forall k\in\mathcal{K},\label{CUSBD33c}\\
&\tilde{\psi}_{k}\left(\mu_{k},\mathbf{q}\right)-\tilde{\phi}_{k}\left(\mu_{k},\mathbf{q}\right)\leq 0, \forall k\in\mathcal{K},\label{CUSBD33d}\\
&\vartheta_{k}\leq\log_{2}\left(1+\theta_{k}\right), \forall k\in\mathcal{K},\label{CUSBD33e}\\
&\kappa_{k}^{2}\leq\mu_{k}\vartheta_{k}, \forall k\in\mathcal{K},\label{CUSBD33f}\\
&0\leq\mu_{k}\leq 1, \forall k\in\mathcal{K},\label{CUSBD33g}\\
&q_{k}\geq 0, \forall k\in\mathcal{K}, \sum\limits_{k\in\mathcal{K}}q_{k}\leq \sum\limits_{b\in\mathcal{B}}\lambda_{b}P_{b},\label{CUSBD33h}\\
& \sum\limits_{k\in\mathcal{K}}\mu_{k}\leq BN_{\mathrm{t}},\label{CUSBD33i}
\end{align}
\end{subequations}
where $\widetilde{\gamma}_{k}=2^{r_{k}}-1$, $\psi_{k}\left(\theta_{k},\mathbf{q}\right)$ and $\phi_{k}\left(\theta_{k},\mathbf{q}\right)$ are defined as
\begin{subequations}\label{CUSBD34}
\begin{align}
\psi_{k}\left(\theta_{k},\mathbf{q}\right)&=\left\|\mathbf{Q}\mathbf{w}_{k}\right\|_{2}^{2}\theta_{k}-q_{k}\left|\overline{\mathbf{h}}_{k}^{H}\mathbf{w}_{k}\right|^{2}
+\varphi_{k}\left(\theta_{k},\mathbf{q}\right),\label{CUSBD34a}\\
\varphi_{k}\left(\theta_{k},\mathbf{q}\right)&\triangleq\frac{1}{2}\left(\theta_{k}+\sum\limits_{l\neq k,l\in\mathcal{K}}q_{l}\left|\overline{\mathbf{h}}_{l}^{H}\mathbf{w}_{k}\right|^{2}\right)^{2},\label{CUSBD34b}\\
\phi_{k}\left(\theta_{k},\mathbf{q}\right)&\triangleq\frac{1}{2}\theta_{k}^{2}+\frac{1}{2}\left(\sum\limits_{l\neq k,l\in\mathcal{K}}q_{l}\left|\overline{\mathbf{h}}_{l}^{H}\mathbf{w}_{k}\right|^{2}\right)^{2},\label{CUSBD34c}
\end{align}
\end{subequations}
$\tilde{\psi}_{k}\left(\mu_{k},\mathbf{q}\right)$ and $\tilde{\phi}_{k}\left(\mu_{k},\mathbf{q}\right)$ are defined as
\begin{subequations}\label{CUSBD35}
\begin{align}
\tilde{\psi}_{k}\left(\mu_{k},\mathbf{q}\right)&=\widetilde{\gamma}_{k}\left\|\mathbf{Q}\mathbf{w}_{k}\right\|_{2}^{2}\mu_{k}-q_{k}\left|\overline{\mathbf{h}}_{k}^{H}\mathbf{w}_{k}\right|^{2}
+\tilde{\varphi}_{k}\left(\mu_{k},\mathbf{q}\right),\label{CUSBD35a}\\
\tilde{\varphi}_{k}\left(\mu_{k},\mathbf{q}\right)&\triangleq\frac{1}{2}\left(\widetilde{\gamma}_{k}\mu_{k}+\sum\limits_{l\neq k,l\in\mathcal{K}}q_{l}\left|\overline{\mathbf{h}}_{l}^{H}\mathbf{w}_{k}\right|^{2}\right)^{2},\label{CUSBD35b}\\
\tilde{\phi}_{k}\left(\mu_{k},\mathbf{q}\right)&\triangleq\frac{1}{2}\widetilde{\gamma}_{k}^{2}\mu_{k}^{2}+\frac{1}{2}\left(\sum\limits_{l\neq k,l\in\mathcal{K}}q_{l}\left|\overline{\mathbf{h}}_{l}^{H}\mathbf{w}_{k}\right|^{2}\right)^{2}.\label{CUSBD35c}
\end{align}
\end{subequations}
At the optimal point of problem~\eqref{CUSBD33}, constraints~\eqref{CUSBD33c},~\eqref{CUSBD33e}, and~\eqref{CUSBD33f} are activated. Introduce a turnable control parameter $\tau>0$, we move constraint~\eqref{CUSBD33b} into the objective function, i.e.,~\cite{Che2014}
\begin{equation}\label{CUSBD36}
\min_{\left\{\mu_{k},q_{k},\theta_{k},\kappa_{k},\vartheta_{k}\right\}} \tau\sum\limits_{k\in\mathcal{K}}\mu_{k}+\tau\left(\sum\limits_{k\in\mathcal{K}}\mu_{k}\right)^{2}-
\psi\left(\bm{\kappa},\bm{\mu}\right),~\mathrm{s.t.}~\eqref{CUSBD33c}-\eqref{CUSBD33i},
\end{equation}
where $\psi\left(\bm{\kappa},\bm{\mu}\right)=\sum\limits_{k\in\mathcal{K}}\left(\kappa_{k}^{2}+\tau\mu_{k}^{2}\right)+\tau\left(\sum\limits_{k\in\mathcal{K}}\mu_{k}\right)^{2}$. Note that problem~\eqref{CUSBD36} can be regarded as partial Lagrangian function of problem~\eqref{CUSBD33} with $\tau$ being the Lagrange multiplier associated with constraint~\eqref{CUSBD33b}.

Note that the objective function, constraints~\eqref{CUSBD33c} and~\eqref{CUSBD33d} are non-convex. Therefore, we need to transform them into convex forms via proper operations. Here, we adopt the first-order Taylor series expansion approximation of functions $\psi\left(\bm{\kappa},\bm{\mu}\right)$, $\phi_{k}\left(\theta_{k},\mathbf{q}\right)$, and $\tilde{\phi}_{k}\left(\mu_{k},\mathbf{q}\right)$, around points $\left(\bm{\kappa}^{\left(t\right)},\bm{\mu}^{\left(t\right)}\right)$, $\left(\theta_{k}^{\left(t\right)}, \mathbf{q}^{\left(t\right)}\right)$, and $\left(\mu_{k}^{\left(t\right)}, \mathbf{q}^{\left(t\right)}\right)$ to obtain a low boundary convex approximation, respectively. Let $\mu_{k}^{\left(t\right)}$, $q_{k}^{\left(t\right)}$, and $\theta_{k}^{\left(t\right)}$ be the values of $\mu_{k}$, $\theta_{k}$ and $q_{k}$ at the $t$-th iteration, respectively, $\forall k\in\mathcal{K}$, and $t$ denotes the index of iteration. Therefore, we have
\begin{subequations}\label{URLLC37}
\begin{align}
\psi\left(\bm{\kappa},\bm{\mu}\right)\geq&\rho\left(\bm{\kappa},\bm{\mu}\right)\triangleq\psi\left(\bm{\kappa}^{\left(t\right)},\bm{\mu}^{\left(t\right)}\right)\\\nonumber
&+2\sum\limits_{k\in\mathcal{K}}\left(\kappa_{k}^{\left(t\right)}\left(\kappa_{k}-\kappa_{k}^{\left(t\right)}\right)+\tau\left(\mu_{k}^{\left(t\right)}+
\sum\limits_{l\in\mathcal{K}}\mu_{l}^{\left(t\right)}\right)\left(\mu_{k}-\mu_{k}^{\left(t\right)}\right)\right),\\
\phi_{k}\left(\theta_{k},\mathbf{q}\right)\geq&\varrho_{k}\left(\theta_{k},\mathbf{q}\right)\triangleq\phi_{k}\left(\theta_{k}^{\left(t\right)},\mathbf{q}^{\left(t\right)}\right)\\\nonumber
&+\theta_{k}^{\left(t\right)}\left(\theta_{k}-\theta_{k}^{\left(t\right)}\right)
+\sum\limits_{l\neq k, l\in\mathcal{K}}\sigma_{k,l}\left(\mathbf{q}^{\left(\tau\right)}\right)\left(q_{l}-q_{l}^{\left(\tau\right)}\right),\\
\tilde{\phi}_{k}\left(\mu_{k},\mathbf{q}\right)\geq&\tilde{\varrho}_{k}\left(\mu_{k},\mathbf{q}\right)\triangleq\tilde{\phi}_{k}\left(\mu_{k}^{\left(t\right)},\mathbf{q}^{\left(t\right)}\right)\\\nonumber
&+\tilde{\gamma}_{k}\mu_{k}^{\left(t\right)}\left(\mu_{k}-\mu_{k}^{\left(t\right)}\right)
+\sum\limits_{l\neq k, l\in\mathcal{K}}\sigma_{k,l}\left(\mathbf{q}^{\left(\tau\right)}\right)\left(q_{l}-q_{l}^{\left(\tau\right)}\right),
\end{align}
\end{subequations}
where $\sigma_{k,m}\left(\mathbf{q}\right)\triangleq\left|\overline{\mathbf{h}}_{m}^{H}\mathbf{w}_{k}\right|^{2}\sum\limits_{n\neq k, n\in\mathcal{K}}q_{n}\left|\overline{\mathbf{h}}_{n}^{H}\mathbf{w}_{k}\right|^{2}$. Thus, we can rewrite problem~\eqref{CUSBD36} into the following low boundary convex form
\begin{subequations}\label{CUSBD38}
\begin{align}
&\min_{\left\{\mu_{k},q_{k},\theta_{k},\kappa_{k},\vartheta_{k}\right\}} \sum\limits_{k\in\mathcal{K}}\tau\mu_{k}-\rho\left(\bm{\kappa},\bm{\mu}\right) \label{CUSBD38a}\\
\mathrm{s.t.}~&\psi_{k}\left(\theta_{k},\mathbf{q}\right)-\varrho_{k}\left(\theta_{k},\mathbf{q}\right)\leq 0, \forall k\in\mathcal{K},\label{CUSBD38b}\\
&\tilde{\psi}_{k}\left(\mu_{k},\mathbf{q}\right)-\tilde{\varrho}_{k}\left(\mu_{k},\mathbf{q}\right)\leq 0, \forall k\in\mathcal{K},\label{CUSBD38c}\\
&\eqref{CUSBD33e}-\eqref{CUSBD33i}.\label{CUSBD37d}
\end{align}
\end{subequations}
Note that $\kappa_{k}^{2}\leq \mu_{k}\vartheta_{k}$ can be rewritten as $\left\|\left[\kappa_{k},\frac{\mu_{k}-\vartheta_{k}}{2}\right]^{T}\right\|\leq\frac{\mu_{k}+\vartheta_{k}}{2}$. This implies that problem~\eqref{CUSBD38} can be easily solved with the classical convex optimization tools, such as the CVX tools~\cite{CVXTool}. The algorithm used to solve problem~\eqref{CUSBD05} is summarized in Algorithm~\ref{CUSBDA01}, where $\varsigma>0$ is a update step-size, $\zeta^{\left(t\right)}$ is the objective value of problem~\eqref{CUSBD38} at the $t$-th iteration, $\upsilon^{\left(\iota\right)}$ is the objective value of problem~\eqref{CUSBD11} at the $\iota$-th iteration with fixed auxiliary variables $\bm{\lambda}$ and $\delta>0$ is a stopping threshold. The convergence of Steps~\ref{URLLCA0103} to~\ref{URLLCA0105} can be guaranteed with the successive convex approximation~\cite{SPLTran2012} with the conclusions obtained~\cite{TCOMKwan2014}. Meanwhile, the convergence of Steps~\ref{URLLCA0103} to~\ref{URLLCA0106} can be guaranteed by using the monotonic boundary theory~\cite{MathBibby1924}. In a word, the convergence of Algorithm~\ref{CUSBDA01} can be guaranteed with the conclusions obtained in~\cite{TITZhang2012}.
\begin{algorithm}[!ht]
\caption{Solution of constrained problem~\eqref{CUSBD05}}\label{CUSBDA01}
\begin{algorithmic}[1]
\STATE Initialize $\lambda_{b}, \forall b\in\mathcal{B}$, $\tau^{\left(0\right)}>0$, $\varrho^{\left(0\right)}=1$, and  stopping threshold $\delta$.\label{URLLCA0100}
\STATE Let $\iota=0$ and $t=0$ and $\tau=\tau^{\left(0\right)}$.  Initialize beamforming vector $\mathbf{w}_{k}^{\left(0\right)}$ and $p_{k}^{\left(0\right)}$, $\forall k\in\mathcal{K}$, such that constraint~\eqref{CUSBD05b} and~\eqref{CUSBD05c} are satisfied. \label{URLLCA0101}
\STATE Compute $\overrightarrow{\gamma}_{k}$ with $\mathbf{w}_{k}^{\left(0\right)}$ and $p_{k}^{\left(0\right)}$ to obtain $\bar{\gamma}_{k}$, $\forall k\in\mathcal{K}$. Compute $q_{k}^{\left(0\right)}$ with~\eqref{CUSBD16}, let $\theta_{k}=\bar{\gamma}_{k}$, $\vartheta_{k}=\log_{2}\left(1+\theta_{k}\right)$, $\kappa_{k}^{2}=\mu_{k}\vartheta_{k}$, $\forall k\in\mathcal{K}$, and initialize $\zeta^{\left(0\right)}$ and $\upsilon^{\left(0\right)}$.\label{URLLCA0102}
\STATE Let $t\leftarrow t+1$. Solve problem~\eqref{CUSBD38} to obtain $\mu_{k}^{\left(t\right)}$, $q_{k}^{\left(t\right)}$, $\theta_{k}^{\left(t\right)}$, $\kappa_{k}^{\left(t\right)}$, $\vartheta_{k}^{\left(t\right)}$, and $\zeta^{\left(t\right)}$, $\forall k\in\mathcal{K}$.\label{URLLCA0103}
\STATE If $\left|\frac{\zeta^{\left(t\right)}-\zeta^{\left(t-1\right)}}{\zeta^{\left(t-1\right)}}\right|\leq\delta$, go to Step~\ref{URLLCA0105}. Otherwise, update $\tau$ as
\begin{equation}\label{CUSBD39}
\tau=\left\{
\begin{aligned}
\tau+\varsigma\sum\limits_{k\in\mathcal{K}}\left(\mu_{k}-\mu_{k}^{2}\right),~\text{if}~\tau+\varsigma\sum\limits_{k\in\mathcal{K}}\left(\mu_{k}-\mu_{k}^{2}\right)>0\\
\tau\qquad\qquad,~\text{if}~\tau+\varsigma\sum\limits_{k\in\mathcal{K}}\left(\mu_{k}-\mu_{k}^{2}\right)<0.
\end{aligned}
\right.
\end{equation}
and go to Step~\ref{URLLCA0103}. \label{URLLCA0104}
\STATE Let $\iota=\iota+1$, update $\mathbf{w}_{k}^{\left(\iota\right)}$ with $q_{k}^{\left(t\right)}$ and~\eqref{CUSBD20}, and calculate the objective value $\upsilon^{\left(\iota\right)}$. If $\left|\frac{\upsilon^{\left(\iota\right)}-\upsilon^{\left(\iota-1\right)}}{\upsilon^{\left(\iota-1\right)}}\right|\leq\delta$, stop iteration and go to Step~\ref{URLLCA0106}. Otherwise, let $\tau=\tau^{\left(0\right)}$ and go to Step~\ref{URLLCA0103}.\label{URLLCA0105}
\STATE Obtain the objective value $\varrho^{\left(*\right)}=\upsilon^{\left(\iota\right)}$, $\left|\frac{\varrho^{\left(*\right)}-\varrho^{\left(0\right)}}{\varrho^{\left(0\right)}}\right|\leq\delta$, then stop iteration. Otherwise, let $\varrho^{\left(0\right)}=\varrho^{\left(*\right)}$ and update $\lambda_{b}$ with~\eqref{CUSBD40}, $\forall b\in\mathcal{B}$,
\begin{equation}\label{CUSBD40}
\lambda_{b}=\left\{
\begin{aligned}
\lambda_{b}\qquad,~\text{if}~\lambda_{b}-\varsigma g_{b}<0\\
\lambda_{b}-\varsigma g_{b},~\text{if}~\lambda_{b}-\varsigma g_{b}\geq 0.
\end{aligned}
\right.
\end{equation}
and go Step~\ref{URLLCA0103}.\label{URLLCA0106}
\end{algorithmic}
\end{algorithm}

\subsection{Brute-force Search Algorithm}

As discussed in the last paragraph of Section~\ref{SystemModelAndProblem}, we can obtain the optimal the solution to problem~\eqref{CUSBD05} or problem~\eqref{CUSBD11} via fixing set $\mathcal{S}$ determined by the brute-force search method. The first step of the brute-force search algorithm is to check the feasibility of the set $\mathcal{S}$, i.e., whether the minimum user rate of each user in the set $\mathcal{S}$ is met or not. The feasibility can be judged via addressing a power minimization problem, i.e.,
\begin{subequations}\label{CUSBD41}
\begin{align}
&\min\limits_{\left\{\tilde{\mathbf{w}}_{k'}\right\}}~\sum\limits_{k'\in\mathcal{S}}\left\|\tilde{\mathbf{w}}_{k'}\right\|^{2}\label{CUSBD41a}\\
\mathrm{s.t.}~&\tilde{\gamma}_{k'}\leq\frac{\left|\overline{\mathbf{h}}_{k'}^{H}\tilde{\mathbf{w}}_{k'}\right|^{2}}
{\sum\limits_{l'\neq k'}\left|\overline{\mathbf{h}}_{k'}^{H}\tilde{\mathbf{w}}_{l'}\right|^{2}+1}, \forall k'\in\mathcal{S}.\label{CUSBD41b}
\end{align}
\end{subequations}
Let $\mathbf{o}=\left[\overline{\mathbf{h}}_{k'}^{H}\tilde{\mathbf{w}}_{1'},\overline{\mathbf{h}}_{k'}^{H}\tilde{\mathbf{w}}_{2'},\cdots,
\overline{\mathbf{h}}_{k'}^{H}\tilde{\mathbf{w}}_{K'},1\right]^{T}$. Thus, problem~\eqref{CUSBD41} can be rewritten as follows:
\begin{equation}\label{URLLD42}
\min\limits_{\left\{\tilde{\mathbf{w}}_{k'}\right\}}~\sum\limits_{k'\in\mathcal{K}}\left\|\tilde{\mathbf{w}}_{k'}\right\|^{2},~
\mathrm{s.t.}~\left\|\mathbf{o}\right\|\leq\sqrt{1+\frac{1}{\tilde{\gamma}_{k'}}}\overline{\mathbf{h}}_{k'}^{H}\tilde{\mathbf{w}}_{k'}.
\end{equation}
Problem~\eqref{URLLD42} can be easily solved by the classical convex optimization tools, such as CVX~\cite{CVXTool}. Let $\left\{\tilde{\mathbf{w}}_{k'}^{\left(*\right)}\right\}$  be the optimal solution to problem~\eqref{URLLD42}. If $\sum\limits_{k'\in\mathcal{S}}\left\|\mathbf{Q}_{b}\tilde{\mathbf{w}}_{k'}\right\|_{2}^{2}\leq P_{b}$, $\forall b\in\mathcal{B}$, and $\left|\mathcal{S}\right|\leq BN_{\mathrm{t}}$, then the minimum user rate of each user in the set $\mathcal{S}$ is met, i.e., the combination of users in the set $\mathcal{S}$ is valid. Further, for a given set $\mathcal{S}$, the $\left\{\mathbf{w}_{k'}^{\left(*\right)}\right\}$ and $\left\{q_{k'}^{\left(*\right)}\right\}$ can be obtained via solving the following problem:
\begin{subequations}\label{CUSBD43}
\begin{align}
&\max_{\left\{q_{k'},\mathbf{w}_{k'}\right\}} \sum\limits_{k'\in\mathcal{S}}\overleftarrow{R}_{k'}, \label{CUSBD43a}\\
\mathrm{s.t.}~&q_{k'}\geq 0, \left\|\mathbf{w}_{k'}\right\|_{2}=1, \forall k'\in\mathcal{S},\label{CUSBD43b}\\
&r_{k'}\leq \overleftarrow{R}_{k'}, \forall k'\in\mathcal{S},\label{CUSBD43c}\\
&\sum\limits_{k'\in\mathcal{S}}q_{k'}\leq \sum\limits_{b\in\mathcal{B}}\lambda_{b}P_{b}.\label{CUSBD43d}
\end{align}
\end{subequations}
As discussed in the last paragraph of Section~\ref{SystemModelAndProblem}, $\sum\limits_{\left|\mathcal{S}\right|=1}^{BN_{\mathrm{t}}}\frac{K!}{\left|\mathcal{S}\right|!\left(K-\left|\mathcal{S}\right|\right)!}$ combinational problems are needed to be addressed and the optimal combination can be selected out to achieve the maximal sum rate. Note that problem~\eqref{CUSBD43} can be solved with a similar method as Algorithm~\ref{CUSBDA01}.In particular, with fixed virtual uplink transmit powers $\left\{q_{k}\right\}$, the optimal beamforming vectors $\left\{\mathbf{w}_{k}\right\}$ is calculated with~\eqref{CUSBD31}. Given beamforming vectors $\left\{\mathbf{w}_{k}\right\}$ and auxiliary variable $\left\{\lambda_{b}\right\}$, $\forall k\in\mathcal{K}$ and $\forall b\in\mathcal{B}$, problem~\eqref{CUSBD43} is reformulated as follows:
\begin{subequations}\label{CUSBD44}
\begin{align}
&\max_{\left\{q_{k'},\theta_{k'}\right\}} \sum\limits_{k'\in\mathcal{S}}\log_{2}\left(1+\theta_{k'}\right), \label{CUSBD44a}\\
\mathrm{s.t.}~&\psi_{k'}\left(\theta_{k'},\mathbf{q}^{'}\right)-\phi_{k'}\left(\theta_{k'},\mathbf{q}^{'}\right)\leq 0, \forall k'\in\mathcal{S},\label{CUSBD44b}\\
&\tilde{\psi}_{k'}^{'}\left(\mathbf{q}^{'}\right)-\tilde{\phi}_{k'}^{'}\left(\mathbf{q}^{'}\right)\leq 0, \forall k'\in\mathcal{S},\label{CUSBD44c}\\
&q_{k'}\geq 0, \forall k'\in\mathcal{S}, \sum\limits_{k'\in\mathcal{S}}q_{k'}\leq \sum\limits_{b\in\mathcal{B}}\lambda_{b}P_{b},\label{CUSBD44d}
\end{align}
\end{subequations}
where $\mathbf{q}^{'}=\left[q_{1'},\cdots,q_{K'}\right]^{T}$, $\tilde{\psi}_{k'}^{'}\left(\mathbf{q}^{'}\right)$ and $\tilde{\phi}_{k'}^{'}\left(\mathbf{q}^{'}\right)$ are defined as
\begin{subequations}\label{CUSBD45}
\begin{align}
\tilde{\psi}_{k'}^{'}\left(\mathbf{q}^{'}\right)&=\widetilde{\gamma}_{k'}\left\|\mathbf{Q}\mathbf{w}_{k'}\right\|_{2}^{2}-q_{k'}\left|\overline{\mathbf{h}}_{k'}^{H}\mathbf{w}_{k'}\right|^{2}
+\tilde{\varphi}_{k'}^{'}\left(\mathbf{q}^{'}\right),\label{CUSBD45a}\\
\tilde{\varphi}_{k'}^{'}\left(\mathbf{q}^{'}\right)&\triangleq\frac{1}{2}\left(\widetilde{\gamma}_{k'}+\sum\limits_{l\neq k',l\in\mathcal{S}}q_{l}\left|\overline{\mathbf{h}}_{l}^{H}\mathbf{w}_{k'}\right|^{2}\right)^{2},\label{CUSBD45b}\\
\tilde{\phi}_{k}^{'}\left(\mathbf{q}^{'}\right)&\triangleq\frac{1}{2}\widetilde{\gamma}_{k}^{2}+\frac{1}{2}\left(\sum\limits_{l\neq k',l\in\mathcal{S}}q_{l}\left|\overline{\mathbf{h}}_{l}^{H}\mathbf{w}_{k'}\right|^{2}\right)^{2}.\label{CUSBD45c}
\end{align}
\end{subequations}
Instead of directly solving problem~\eqref{CUSBD43}, we resort to addressing the following approximated problem:
\begin{subequations}\label{CUSBD46}
\begin{align}
&\max_{\left\{q_{k'},\theta_{k'}\right\}} \sum\limits_{k'\in\mathcal{S}}\log_{2}\left(1+\theta_{k'}\right), \label{CUSBD46a}\\
\mathrm{s.t.}~&\psi_{k'}\left(\theta_{k'},\mathbf{q}^{'}\right)-\rho_{k'}\left(\theta_{k'},\mathbf{q}^{'}\right)\leq 0, \forall k'\in\mathcal{S},\label{CUSBD46b}\\
&\tilde{\psi}_{k'}^{'}\left(\mathbf{q}^{'}\right)-\tilde{\varrho}_{k'}^{'}\left(\mathbf{q}^{'}\right)\leq 0, \forall k'\in\mathcal{S},\label{CUSBD46c}\\
&q_{k'}\geq 0, \forall k'\in\mathcal{S}, \sum\limits_{k'\in\mathcal{S}}q_{k'}\leq \sum\limits_{b\in\mathcal{B}}\lambda_{b}P_{b},\label{CUSBD46d}
\end{align}
\end{subequations}
where $\tilde{\varrho}_{k'}^{'}\left(\mathbf{q}^{'}\right)\triangleq\tilde{\phi}_{k'}^{'}\left(\mathbf{q}^{'\left(t\right)}\right)
+\sum\limits_{l\neq k', l\in\mathcal{S}}\sigma_{k',l}\left(\mathbf{q}^{'\left(\tau\right)}\right)\left(q_{l}-q_{l}^{\left(\tau\right)}\right)$. Consequently, the algorithm used to solve problem~\eqref{CUSBD44} is summarized as Algorithm~\ref{CUSBDA02}.
\begin{algorithm}[!ht]
\caption{Solution of constrained problem~\eqref{CUSBD44}}\label{CUSBDA02}
\begin{algorithmic}[1]
\STATE Initialize $\lambda_{b}, \forall b\in\mathcal{B}$, $\varrho^{\left(0\right)}=1$, and  stopping threshold $\delta$.\label{URLLCA0200}
\STATE Let $t=0$ and $\upsilon^{\left(0\right)}$ to be a nonzero value.  Initialize beamforming vector $\mathbf{w}_{k'}^{\left(0\right)}$ and $q_{k'}^{\left(0\right)}$, $\forall k'\in\mathcal{S}$, such that constraint~\eqref{CUSBD43b},~\eqref{CUSBD43c}, and~\eqref{CUSBD43d} are satisfied. \label{URLLCA0201}
\STATE Let $t\leftarrow t+1$. Solve problem~\eqref{CUSBD46} to obtain $q_{k'}^{\left(t\right)}$, $\theta_{k'}^{\left(t\right)}$, and $\zeta^{\left(t\right)}$, $\forall k'\in\mathcal{S}$.\label{URLLCA0202}
\STATE If $\left|\frac{\upsilon^{\left(t\right)}-\upsilon^{\left(t-1\right)}}{\upsilon^{\left(t-1\right)}}\right|\leq\delta$, stop iteration and go to Step~\ref{URLLCA0204}. Otherwise,  update $\mathbf{w}_{k'}^{\left(t\right)}$ with $q_{k'}^{\left(t\right)}$ and~\eqref{CUSBD20}, and go to Step~\ref{URLLCA0202}.\label{URLLCA0203}
\STATE Obtain the objective value $\varrho^{\left(*\right)}=\upsilon^{\left(t\right)}$, $\left|\frac{\varrho^{\left(*\right)}-\varrho^{\left(0\right)}}{\varrho^{\left(0\right)}}\right|\leq\delta$, then stop iteration. Otherwise, let $\varrho^{\left(0\right)}=\varrho^{\left(*\right)}$ and update $\lambda_{b}$ with~\eqref{CUSBD40} and go Step~\ref{URLLCA0201}.\label{URLLCA0204}
\end{algorithmic}
\end{algorithm}

\subsection{ZFBF with SUS Algorithm}

The process of  searching all possible combinations of scheduling users in the brute-force search algorithm is with high complexity. Hence, in this subsection, we use the semiorthogonal user scheduling (SUS) method to generate the scheduling user set $\mathcal{S}$ with zero-forcing beamforming (ZFBF)~\cite{JSACYoo2006}. For a given set $\mathcal{S}$, beamforming vectors $\mathbf{W}\left(\mathcal{S}\right)=\left[\mathbf{w}_{1'},\cdots,\mathbf{w}_{k'},\cdots,\mathbf{w}_{K'}\right]$ is given by
\begin{equation}\label{CUSBD47}
\mathbf{w}_{k'}=\frac{\mathbf{W}\left(\mathcal{S}\right)\left[:,k'\right]}{\left\|\mathbf{W}\left(\mathcal{S}\right)\left[:,k'\right]\right\|_{2}}
\end{equation}
where $\mathbf{W}\left(\mathcal{S}\right)=\mathbf{H}\left(\mathcal{S}\right)\left(\mathbf{H}\left(\mathcal{S}\right)^{H}\mathbf{H}\left(\mathcal{S}\right)\right)^{-1}$ with $\mathbf{H}\left(\mathcal{S}\right)=\left[\overline{\mathbf{h}}_{1'},\cdots,\overline{\mathbf{h}}_{k'},\cdots,\overline{\mathbf{h}}_{K'}\right]$. Thus, the SINR of the $k'$ user is calculated as
$\gamma_{k'}=\frac{p_{k'}}{\left\|\mathbf{W}\left(\mathcal{S}\right)\left[:,k'\right]\right\|_{2}^{2}}$. To satisfy the minimum rate requirement $r_{k'}$ of the $k'$ user, the required minimum power
is $p_{k'}^{\flat}=\widetilde{\gamma}_{k'}\left\|\mathbf{W}\left(\mathcal{S}\right)\left[:,k'\right]\right\|_{2}^{2}$. If $\sum\limits_{k'\in\mathcal{S}}p_{k'}^{\flat}\left\|\mathbf{Q}_{b}\mathbf{w}_{k'}\right\|_{2}^{2}\leq P_{b}$, $\forall b\in\mathcal{B}$, and $\left|\mathcal{S}\right|\leq BN_{\mathrm{t}}$, the set $\mathcal{S}$ of scheduling users is effectiveness. Furthermore, for a fixed set $\mathcal{S}$ of scheduling users obtained using the ZFBF and the SUS Algorithm, problem~\eqref{CUSBD03} is rewritten as follows:
\begin{subequations}\label{CUSBD48}
\begin{align}
&\max_{\left\{p_{k'}\right\}} \sum\limits_{k'\in\mathcal{S}}\log_{2}\left(1+\frac{p_{k'}}{\left\|\mathbf{W}\left(\mathcal{S}\right)\left[:,k'\right]\right\|_{2}^{2}}\right), \label{CUSBD48a}\\
\mathrm{s.t.}~&p_{k'}>0, \widetilde{\gamma}_{k'}\leq \frac{p_{k'}}{\left\|\mathbf{W}\left(\mathcal{S}\right)\left[:,k'\right]\right\|_{2}^{2}}, \forall k'\in\mathcal{S},\label{CUSBD48b}\\
&\sum\limits_{k'\in\mathcal{S}}p_{k'}\left\|\mathbf{Q}_{b}\mathbf{w}_{k'}\right\|_{2}^{2}\leq P_{b}, \forall b\in\mathcal{B},\label{CUSBD48c}
\end{align}
\end{subequations}
Note that problem~\eqref{CUSBD48} can be easily solved using the CVX tools~\cite{CVXTool}. The detailed algorithm for solving problem~\eqref{CUSBD03} using the ZFBF with SUS Algorithm is described in Algorithm~\ref{CUSBDA03}. Note that in Algorithm~\ref{CUSBDA03}, the process is initialized with $t=1$, $\mathbf{g}_{k}=\overline{\mathbf{h}}_{k}$. The choice of $\zeta$ is discussed in detail in~\cite{JSACYoo2006}.
\begin{algorithm}[!ht]
\caption{Solving problem~\eqref{CUSBD03} using the ZFBF with the SUS Algorithm}\label{CUSBDA03}
\begin{algorithmic}[1]
\STATE Initialize $t=1$, $\mathcal{T}_{t}=\mathcal{K}$, $\mathcal{S}=\mathcal{\phi}$(empty set).\label{URLLCA0300}
\STATE For each user $k\in\mathcal{T}_{t}$, calculate $\mathbf{g}_{k}$ as $\mathbf{g}_{k}=\overline{\mathbf{h}}_{k}-\sum\limits_{l=1}^{t-1}\frac{\overline{\mathbf{h}_{k}^{H}}\mathbf{g}_{\left(l\right)}}{\left\|\mathbf{g}_{\left(l\right)}\right\|}\mathbf{g}_{\left(l\right)}$. \label{URLLCA0301}
\STATE Select the $t$-th user as $t'=\max\limits_{k\in\mathcal{T}_{t}}\left\|\mathbf{g}_{k}\right\|$, $\mathcal{S}=\mathcal{S}\cup\left\{t'\right\}$, $\mathbf{g}_{\left(t\right)}=\mathbf{g}_{\left(t'\right)}$.\label{URLLCA0302}
\STATE If $\sum\limits_{k'\in\mathcal{S}}p_{k'}^{\flat}\left\|\mathbf{Q}_{b}\mathbf{w}_{k'}\right\|_{2}^{2}\leq P_{b}$, $\forall b\in\mathcal{B}$, and $\left|\mathcal{S}\right|\leq BN_{\mathrm{t}}$, where $\mathbf{w}_{k'}$ is obtained via~\eqref{CUSBD47}, then calculate $\mathcal{T}_{t}$ as:\label{URLLCA0303}
\begin{equation}\label{CUSBD49}
\mathcal{T}_{t+1}=\left\{k\in\mathcal{T}_{t},k\neq t'~|~ \frac{\left|\overline{\mathbf{h}}_{l}^{H}\mathbf{g}_{\left(l\right)}\right|}{\left\|\overline{\mathbf{h}}_{\left(k\right)}\right\|\left\|\mathbf{g}_{\left(l\right)}\right\|}<\zeta\right\},
\end{equation}
let $t=t+1$ and go to Step~\ref{URLLCA0302}. Otherwise, go to step~\ref{URLLCA0304}.
\STATE Solve problem~\eqref{CUSBD48} with the set $\mathcal{S}$ of scheduling users.\label{URLLCA0304}
\end{algorithmic}
\end{algorithm}
\begin{remark}\label{Remark04}
Algorithm~\ref{CUSBDA03} can be regarded as an extension of the ZFBF with SUS algorithm developed in~\cite{JSACYoo2006} subjecting to the constrains of minimum rate requirements and maximum allowable transmit power for the downlink of multicell multiuser joint transmission networks. Note that Algorithm~\ref{CUSBDA01} addresses the joint user scheduling and beamforming design problem, i.e, simultaneously realize the set $\mathcal{S}$ of scheduling users selection and the beamforming vectors optimization. However, Algorithm~\ref{CUSBDA02} and Algorithm~\ref{CUSBDA03} all consider the user scheduling and beamforming design problems, but they are divided into two stages, i.e., selecting the set $\mathcal{S}$ of scheduling users and optimizing the beamforming vectors. Compared to Algorithm~\ref{CUSBDA01} and Algorithm~\ref{CUSBDA02}, the implementation of Algorithm~\ref{CUSBDA03} is relatively simple.
\end{remark}

\section{\label{Simulation} Numerical Results}
In this section, we focus on evaluating the performance of the proposed algorithm via numerical methods. We consider a coordinated cluster of $B = 3$ hexagonal adjacent cells each consisting of one BS and $K_{b}$ users, the total number of users is $K=\sum\limits_{b\in\mathcal{B}}K_{b}$. The $K$ users are randomly distributed in the cooperative region of coordinated cluster, as illustrated in Fig.~\ref{Simulationmodel}. In our experiment, the cell radius is 300m and cooperative radius is 100m. The minimum rate $r_{k}$ of the $k$-th user is set to be $0.3 \widetilde{r}_{k}$ where $\widetilde{r}_{k}$ is the rate of the $k$-th user achieved by single user communication with full power maximum ratio transmission. Stopping threshold $\delta = 10^{-3}$. The simulated channel coefficients follow the previous discussed system model in Section \ref{SystemModelAndProblem}.
All users share the same noise variance, \emph{i.e.}, $\sigma_{k}^{2}=\sigma^{2}$, $\forall k\in\mathcal{K}$.  For easy of notation, we define the SNR as $\mathrm{SNR}=10\log_{10}\left(\frac{P}{\sigma^{2}}\right)$ in decibels. All the experiments are implemented with Matlab R2017a and the configuration of computer is Windows 10, Intel(R) Core(TM) i7-8700, 8 GB RAM.

We first give the computational complexities of Algorithm 1, 2, and 3, which are respectively denoted as
${\Im _1} = O(I{t_1}({(5K)^3}(7K + 2) + K{(B{N_t})^{2.7}}))$,
${\Im _2} = \sum\limits_{\left| S \right| = 1}^{\min (B{N_{\rm{t}}},K)} {\frac{{K!}}{{\left| S \right|!\left( {K - \left| S \right|} \right)!}}} O({(\left| S \right|B{N_t})^3} + {\xi _1}I{t_2}({(2\left| S \right|)^3}(3\left| S \right| + 1) + \left| S \right|{(B{N_t})^{2.7}}))$,
and ${\Im _3} = \sum\limits_{K' = 1}^{\min (B{N_{\rm{t}}},K)} {O({{(K - K')}^2}B{N_t} + {\xi _2}I{t_3}{{(K')}^3}(2K' + B))} $. Here, $O(\cdot)$ stands
for the big-O notation. $It_1, It_2, $ and $It_3$ represent the numbers of the operation times in Algorithm 1, 2, and 3. $\xi_1, \xi_2 = 1$ if the selected users are respectively feasible in Algorithm 2 and 3, and they are 0 vice versa. Algorithms 1 and 3 possess computational complexity with polynomial forms, and computational complexity of Algorithm 3 is lower than that of Algorithm 1. On the other hand, Algorithm 2 achieves the largest computational complexity with a non-polynomial form.

\begin{figure*}[t]
\centering
\includegraphics[width=0.8\columnwidth,keepaspectratio]{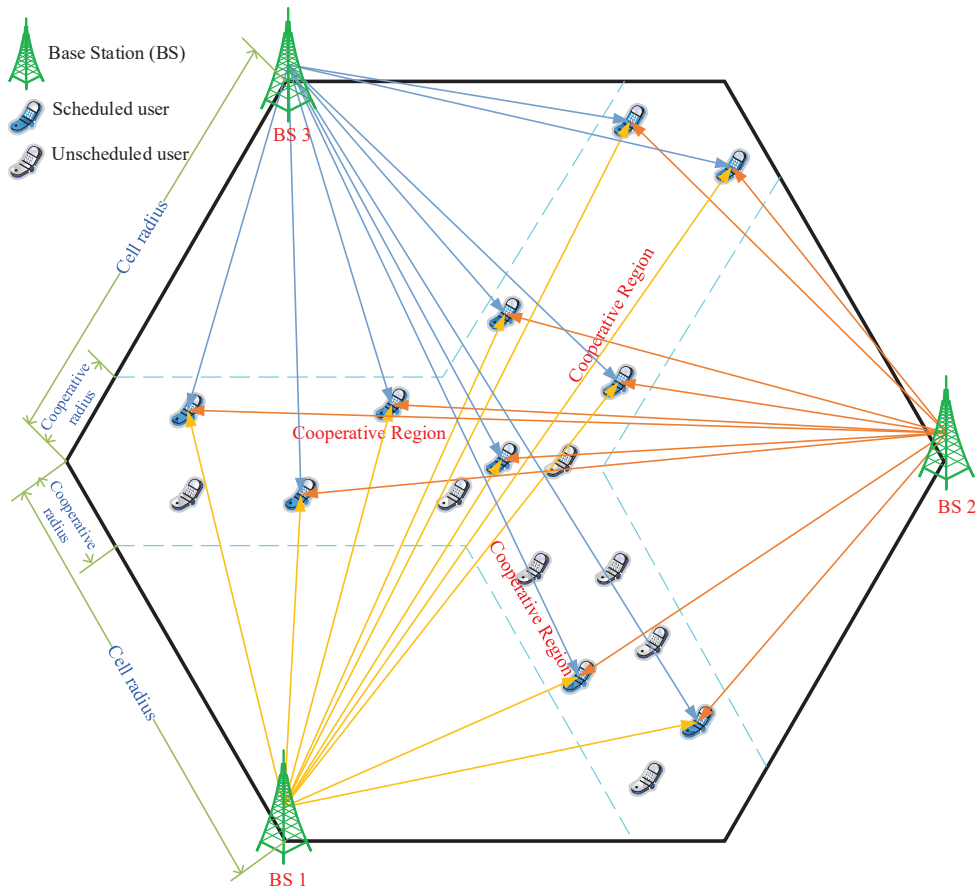}
\caption{Illustration of simulation model.}
\label{Simulationmodel}
\end{figure*}

\subsection{\label{Effectiveness} Effectiveness experiments}
This experiment aims at validating the effectiveness of the proposed Algorithms 1, 2 and 3. To begin with, we generate channel coefficients of user groups with different transmitting parameters. In the experiment, $N_t = 2,  B$ = 3, SNR = 0 dB, $K = 4, 6, 8$. For each group, maximum ratio transmission (MRT) method is first adopted to obtain the minimum rate $r_{k}$, then the three algorithms are individually implemented with the simulated data. Note that monte carlo approach is applied in the experiment and the final sum rate is the mean value.

\begin{figure*}[t]
\centering
\includegraphics[width=1\columnwidth,keepaspectratio]{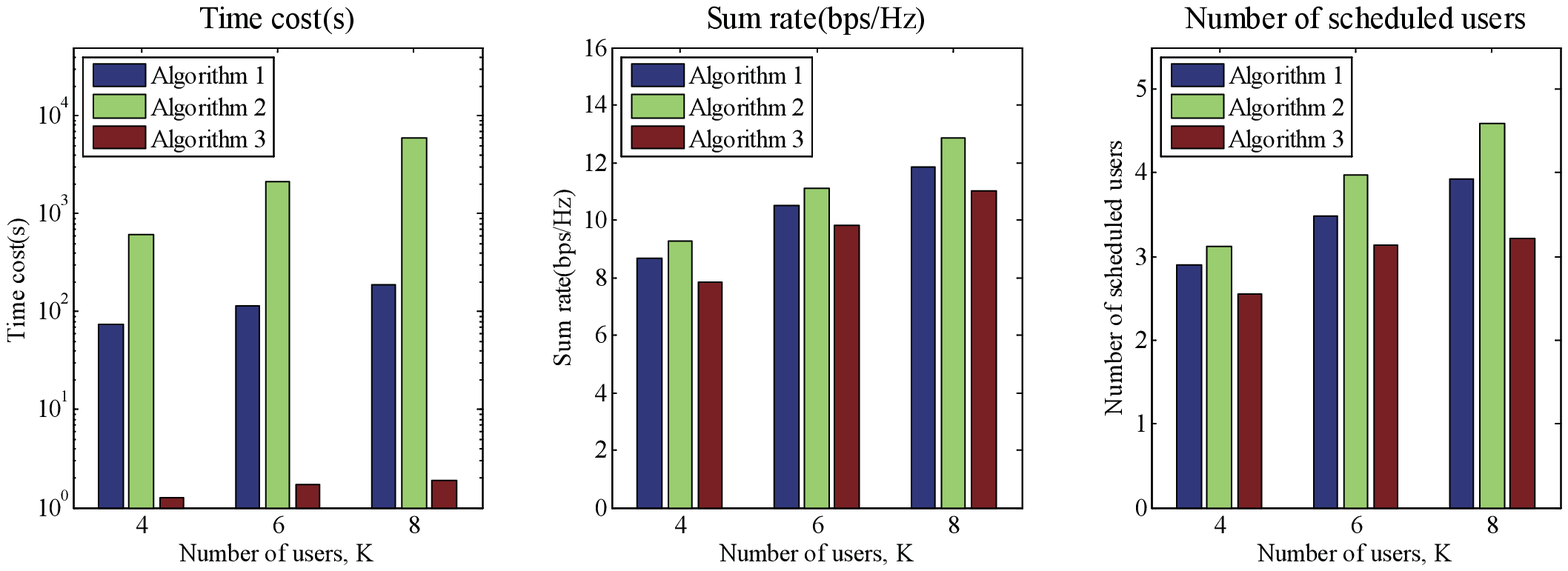}
\caption{Experiment results of different algorithms.}
\label{exp1_res}
\end{figure*}


As it can be seen from Fig. \ref{exp1_res}, all the three algorithms could accomplish the cross-layer optimization tasks, which shows their effectiveness, and Algorithm 1 obtains more balanced results compared with Algorithm 2 and Algorithm 3 in terms of time cost, sum rate and number of scheduled users. Firstly, from the left figure, one can find that the time cost of Algorithm 2 is significantly larger (logarithmic coordinate is applied in y-axis) than Algorithm 1 and Algorithm 3 when the number of users increases from 4 to 8. Algorithm 3 achieves the least time cost since the calculation complexity of power allocation problem (\ref{CUSBD48}) is much simpler than that of Algorithm 1. Generally, the experiment time costs are consistent with the above discussion of computational complexity. Secondly, in the middle figure of Fig. \ref{exp1_res}, Algorithm 2 obtains the global optimal solution at the cost of huge  computation, while Algorithm 3 is inferior to the other two algorithms, becaure the greedy search is easy to fall into local extremum. Algorithm 2, on the other hand, gets a relatively balanced result. Finally, the experimental results also reveal an interesting fact that the numbers of scheduled users using Algorithm 1 and Algorithm 3 are respectively larger and smaller than the optimal value of Algorithm 2.

\subsection{\label{user} Sum rate versus number of users}
\begin{figure*}[t]
\centering
\includegraphics[width=0.7\columnwidth,keepaspectratio]{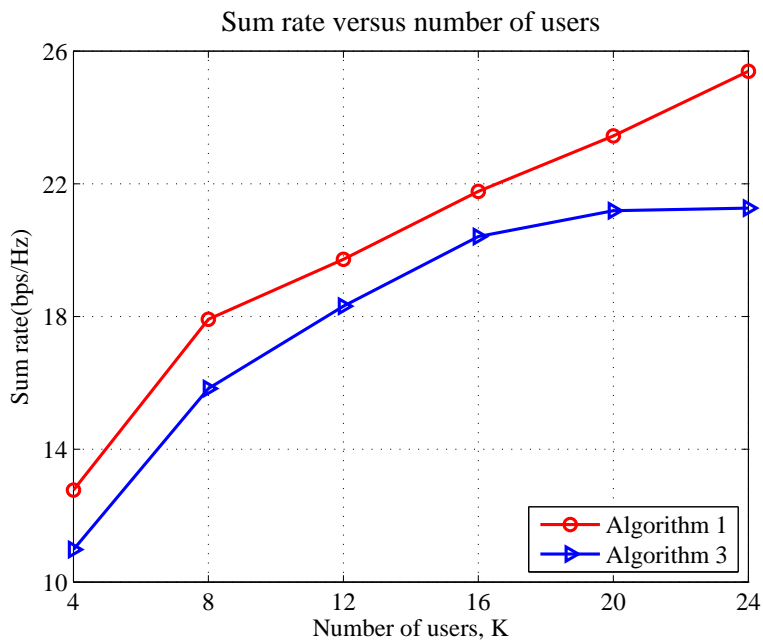}
\caption{Sum rate versus number of users.}
\label{exp2_res}
\end{figure*}
This experiment discusses the performance of Algorithm 1 and 3 with different number of users, where $B = 3, N_t = 4,$ SNR = 0 dB,  and $K =$ 4, 8, 12, 16, 20, 24. Here, Algorithm 2 is not compared since the unbearable time costs. All the results are illustrated in Fig. \ref{exp2_res}. In Fig. \ref{exp2_res}, one can easily find that both the sum rates of Algorithm 1 and Algorithm 3 increase gradually with the increase of user number. However, there is a certain difference in the increasing trend of the two algorithms. As the number of users increases, especially when the number of users is greater than $BN_t$, Algorithm 3 seems to more easily achieve local convergence and the performance gain gradually decreases. Algorithm 1, on the other hand, shows better ability to obtain global optimized results and higher sum rates with the increase of users.

\subsection{\label{SNR} Sum rate versus SNRs}

\begin{figure*}[t]
\centering
\includegraphics[width=1\columnwidth,keepaspectratio]{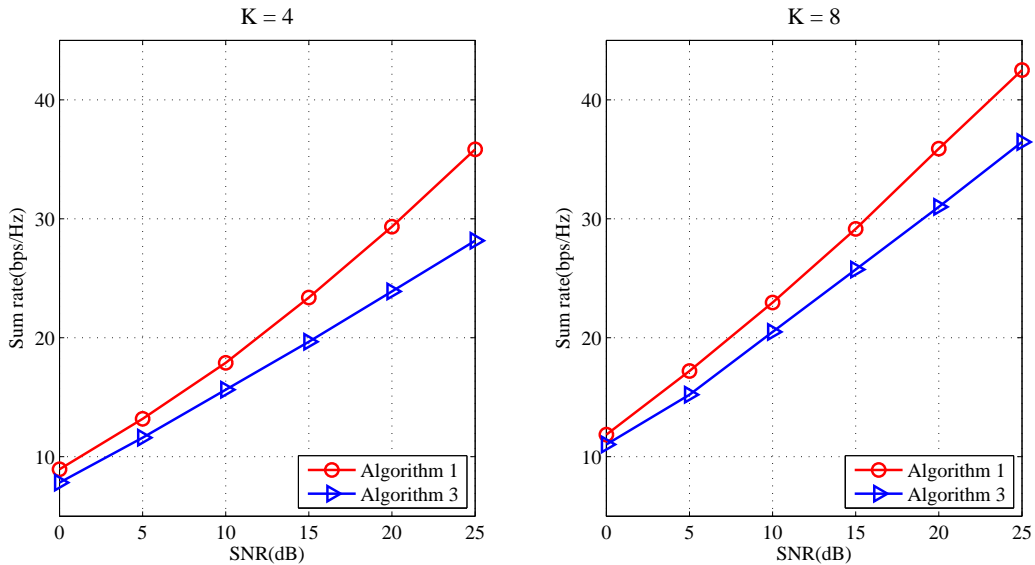}
\caption{Sum rate versus SNRs.}
\label{exp3_res}
\end{figure*}

Fig. \ref{exp3_res} illustrates the average sum rate of Algorithm 1 and Algorithm 3 with different SNRs. In the experiment, $B = 3, N_t = 2, K = 4, 8, 12$,  and SNR = 0, 5, 10, 15, 20, 25 dB. Channel coefficients under the worst SNR condition (SNR = 0 dB) are first generated to confirm the feasibility of users in each group, then the two algorithms are implemented to obtain results. In the figure, numerical results demonstrate that in the low SNR region such as 0$ \sim $10 dB, Algorithm 1 shows slight advantages compared with Algorithm 3, while this advantage increases significantly with the increase of SNR (SNR $\geq$ 15 dB). This is because Algorithm 3 cannot ensure that the global optimal/suboptimal solution is obtained. When the channel condition becomes better (SNR $\geq$ 15 dB), that is, when the solution space is enlarged, local optimal solution is more likely to be obtained using Algorithm 3. However, Algorithm 1 tries to solve the original global optimization problem, so its performance is better than Algorithm 3 when SNR is large.

\subsection{\label{antennas} Sum rate versus number of antennas}

This experiment investigates the performance of Algorithm 1 and 3 with different number of antennas. Fig. \ref{exp4_res} illustrates sum rate versus number of antennas, where $K = 12, 20$, $B =  3, $ SNR = 0, and $N_t = 4, 8, 16$. By observing the experimental results, we can draw two conclusions. On the one hand, all the results become better with the increase of $N_t$, when $K$ is 12 and 20. This indicates that the increase of antennas leads better performance for the joint user scheduling and beamforming design with multiple base stations, since more antennas bring spatial multiple gain. On the other hand, the results obtained by Algorithm 1 are better than those obtained using Algorithm 3, which also implies the advantages of the proposed optimization algorithm.

\begin{figure*}[t]
\centering
\includegraphics[width=1\columnwidth,keepaspectratio]{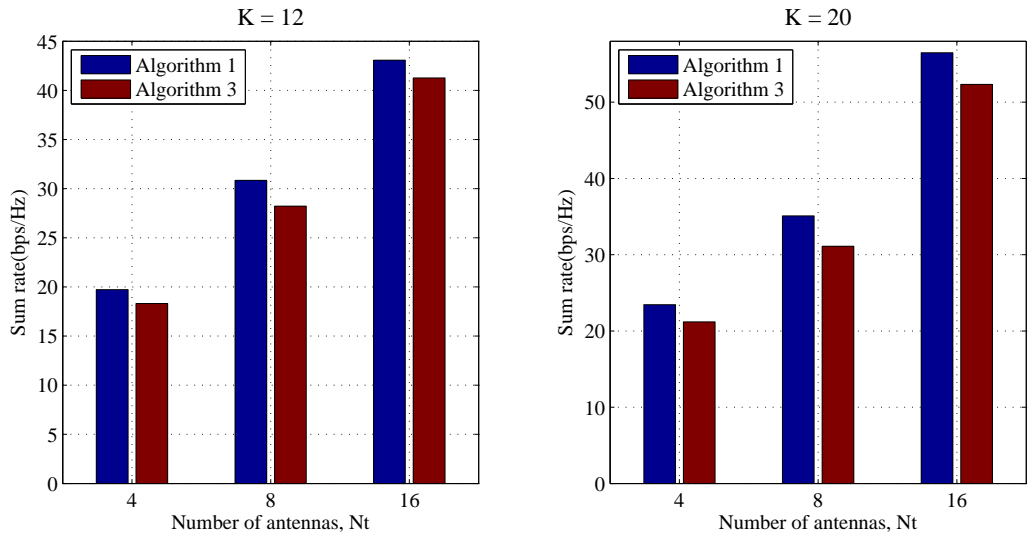}
\caption{Sum rate versus number of antennas.}
\label{exp4_res}
\end{figure*}




\section{\label{Conclusion} Conclusions}
In this paper, we focus on investigating the joint user scheduling and beamforming design for the downlink of multicell multiuser joint transmission networks, while subjecting to the requirement of QoS and the maximum allowable transmit power. For this cross-layer optimization topic,  a mixed discrete-continue variables combinational optimization problem was first formulated. To overcome the difficulties encountered in solving the problem of interest, some basic transformation methods were derived to reformulated the original problem into a tractable form. Then, an effective and efficient optimization method was developed to address it. We also demonstrated that the proposed optimization approach could avoid to allocating power for the unscheduled users. The other two methods, namely, the brute-force search based approach and greedy search based approach, are also proposed. Finally, a large number of experimental results were provided to validate the effectiveness of the developed algorithm, and the proposed optimization approach obtained balanced result compared with the other two approaches.

\begin{small}

\end{small}
\end{document}